\documentclass[aps,prx,superscriptaddress,twocolumn,longbibliography,floatfix]{revtex4-2}
\usepackage{units}
\usepackage{amsmath,braket}
\usepackage{amsthm}
\usepackage{amssymb}
\usepackage{graphicx}
\usepackage{color}
\usepackage{xcolor}
\usepackage{bbold}
\usepackage{apptools}
\usepackage{appendix}
\usepackage{hhline}

\definecolor{myurlcolor}{rgb}{0,0,0.7}
\definecolor{myrefcolor}{rgb}{0.1,0,0.9}

\usepackage[
	breaklinks,
	pdftex,
	colorlinks=true, 
	linkcolor=myrefcolor,
	citecolor=myurlcolor,
	urlcolor=myurlcolor
]{hyperref}

\newcommand{\SMLong}{Appendix}

\newcommand{\SM}{Appendix}

\newtheorem{theorem}{Theorem}
\newtheorem{lemma}{Lemma}

\AtAppendix{\counterwithin{theorem}{section}}
\AtAppendix{\counterwithin{lemma}{section}}
\AtAppendix{\counterwithin{fact}{section}}
\AtAppendix{\counterwithin{definition}{section}}

\usepackage[linesnumbered, ruled,vlined]{algorithm2e}

\graphicspath{{./imagesTesting/}}

\renewcommand{\eqref}[1]{Eq.~(\ref{#1})} %

\def\app#1#2{%
  \mathrel{%
    \setbox0=\hbox{$#1\sim$}%
    \setbox2=\hbox{%
      \rlap{\hbox{$#1\propto$}}%
      \lower1.1\ht0\box0%
    }%
    \raise0.25\ht2\box2%
  }%
}

\ifx\proof\undefined
\newenvironment{proof}[1][\protect\proofname]{\par
	\normalfont\topsep6\p@\@plus6\p@\relax
	\trivlist
	\itemindent\parindent
	\item[\hskip\labelsep\scshape #1]\ignorespaces
}{%
	\endtrivlist\@endpefalse
}
\providecommand{\proofname}{Proof}
\fi

\newtheorem{proposition}{Proposition}
\makeatother

\newcommand{\idg}[1]{{\bfseries #1)}}

\newcommand\numberthis{\addtocounter{equation}{1}\tag{\theequation}}

\providecommand{\factname}{Fact}
\providecommand{\theoremname}{Theorem}
\providecommand{\claimname}{Claim}
\providecommand{\lemmaname}{Lemma}
\providecommand{\definitionname}{Definition}
\providecommand{\corollaryname}{Corollary}
\providecommand{\conjecturename}{Conjecture}

\definecolor{THc}{rgb}{0.9,0.3,0.2}

\newtheorem{definition}{\protect\definitionname}

\usepackage{physics}

\newcommand{\subfigimg}[3][,]{%
	\setbox1=\hbox{\includegraphics[#1]{#3}}%
	\leavevmode\rlap{\usebox1}%
	\rlap{\hspace*{2pt}\raisebox{\dimexpr\ht1-0.5\baselineskip}{{\bfseries \large\textsf{#2}}}}%
	\phantom{\usebox1}%
}

\newcommand{\sectionMain}[1]{
\let\oldaddcontentsline\addcontentsline%
\renewcommand{\addcontentsline}[3]{}%
\section{#1}
\let\addcontentsline\oldaddcontentsline
}

\newcommand{\prlsection}[1]{\section{#1}}

\begin{document}

\title{Efficient witnessing and testing of magic in mixed quantum states}

\author{Tobias Haug}
\email{tobias.haug@u.nus.edu}
\affiliation{Quantum Research Center, Technology Innovation Institute, Abu Dhabi, UAE}

\author{Poetri Sonya Tarabunga}
\email{poetri.tarabunga@tum.de}
\affiliation{Technical University of Munich, TUM School of Natural Sciences, Physics Department, 85748 Garching, Germany}
\affiliation{Munich Center for Quantum Science and Technology (MCQST), Schellingstr. 4, 80799 München, Germany}

\begin{abstract}
Nonstabilizerness or `magic' is a crucial resource for quantum computers which can be distilled from noisy quantum states. However, determining the magic of mixed quantum has been a notoriously difficult task.
Here, we provide efficient witnesses of magic based on the stabilizer R\'enyi entropy which robustly indicate the presence of magic and quantitatively estimate magic monotones.
We also design efficient property testing algorithms to reliably distinguish states with high and low magic, assuming the entropy is bounded.
We apply our methods to certify the number of noisy T-gates under a wide class of noise models.
Additionally, using the IonQ quantum computer, we experimentally verify the magic of noisy random quantum circuits. Surprisingly, we find that magic is highly robust, persisting even under exponentially strong noise.
Our witnesses can also be efficiently computed for matrix product states, revealing that  subsystems of many-body quantum states can contain extensive magic despite entanglement. 
Finally, our work also has direct implications for cryptography and pseudomagic: To mimic high magic states with as little magic as possible, one requires an extensive amount of entropy. This implies that entropy is a necessary resource to hide magic from eavesdroppers.
Our work uncovers powerful tools to verify and study the complexity of noisy quantum systems.
\end{abstract}

\maketitle

 \let\oldaddcontentsline\addcontentsline%
\renewcommand{\addcontentsline}[3]{}%

\prlsection{Introduction}
Nonstabilizerness, which is also colloquially known as magic, is the key quantum resource required to achieve universal quantum computation~\cite{bravyi2005universal,veitch2014resource}. To run a fault-tolerant quantum computer, non-universal Clifford operations~\cite{gottesman1997stabilizer,nielsen2011quantum,kitaev2003fault,eastin2009restriction,howard2014contextuality} are combined with non-Clifford gates, where the latter cannot be implemented transversally in a fault-tolerant manner. Instead, such non-Clifford gates must be implemented via complicated protocols~\cite{bravyi2005universal}. For example, magic state distillation purifies many noisy magic resource states into a less noisy magic state using a costly protocol~\cite{kitaev2003fault,eastin2009restriction,litinski2019magic}.

Crucially, to be able to distill useful magic resource states, one requires the noisy input states to contain at least some degree of magic. However, noise is prone to destroy magic, inhibiting the distillation process. As non-magic states can  be efficiently simulated, this also destroys any hope for quantum advantage~\cite{bravyi2016trading}.  Thus, it is essential to understand the relationship between noise and magic. 
For states $\rho$, noise is characterized by the $2$-R\'enyi entropy $S_2=-\ln\text{tr}(\rho^2)$.
For pure states with $S_2=0$, efficiently computable measures of magic exist~\cite{leone2022stabilizer,haug2022scalable,haug2023efficient,haug2022quantifying,haug2023stabilizer,tarabunga2024nonstabilizerness,lami2023quantum} and efficient quantum algorithms can test whether a given quantum state contains magic~\cite{gross2021schur,haug2022scalable,haug2023efficient,grewal2023improved,bao2024tolerant,arunachalam2024note,iyer2024tolerant,hinsche2024single}.
However, the restriction to pure states is a severe problem, as realistic experiments only prepare noisy mixed states, even with quantum error correction~\cite{bluvstein2024logical}. Previous experiments used magic measures that are only valid for pure states~\cite{oliviero2022measuring,haug2022scalable,niroula2023phase,bluvstein2024logical}, and thus could not fully certify whether the prepared mixed states actually contained magic.
A weaker notion of detecting magic is witnessing, which indicates the presence of magic, but is unable to reliable identify all types of magic states~\cite{dai2022detecting,warmuz2024magic,macedo2025witnessingmagicbellinequalities,macedo2025witnessing}. Such magic witnesses for mixed states have been proposed, but so far no efficiently implementable witness is known to our knowledge.

A different notion of estimating magic is property testing, which robustly determines whether a state has a high or low amount of magic~\cite{rubinfeld1996robust,goldreich1998property,buhrman2008quantum,montanaro2013survey}.
However, testing properties of mixed states is a notoriously difficult problem~\cite{montanaro2013survey}. In the limit where states are highly mixed with $S_2=\omega(\log n)$, testing magic is inherently inefficient~\cite{bansal2024pseudorandomdensitymatrices}. 
In contrast, for states with low entropy $S_2=O(\log n)$, whether testing can be efficient has remained an open problem. 

The poor understanding of how magic and noise interact mainly arises due to a lack of tools to study the magic of mixed states. This has been a major bottleneck not only for experiments, but also for numerical studies and analytics. In particular, while the magic of pure many-body states has been studied recently, magic has not been well understood for mixed many-body states such as entangled subsystems~\cite{sarkar2020characterization,liu2022many,haug2022quantifying,oliviero2022magic,tarabunga2023many,tarabunga2025efficientmutualmagicmagic,korbany2025longrangenonstabilizerness}.
Notably, this problem is shared with most other resource theories such as entanglement or coherence~\cite{montanaro2013survey,bansal2024pseudorandomdensitymatrices}.

The efficiency of property testing is fundamentally connected to quantum cryptography~\cite{ji2018pseudorandom}: Here, one wants to hide information about states from eavesdroppers, which requires testing to be inefficient as a prerequisite. In this context, the notion of pseudoresources has recently been introduced: 
Pseuodresources  mimic high-resource states using only a low amount of resources~\cite{bouland2022quantum,haug2023pseudorandom,gu2023little,bansal2024pseudorandomdensitymatrices,tanggara2025neartermpseudorandompseudoresourcequantum}.
For example, in pseudomagic one generates a state ensemble with low magic $g(n)$ that is indistinguishable (for any efficient quantum algorithm) from a high-magic ensemble with magic $f(n)$~\cite{gu2023little}. 
Pseudoresources can provide important cryptographic primitives such commitment and oblivious transfer~\cite{grilo2025quantumpseudoresourcesimplycryptography}.
Further, pseudomagic is crucial for the task of securely encrypting quantum states: only when the pseudomagic gap is maximal, no information about magic can be learned by eavesdroppers~\cite{haug2025pseudorandom}.
Of course, one would like the pseudomagic gap $f(n)$ vs $g(n)$ to be as large as possible. It has been shown that the gap depends on the entropy of a state. It is maximal for $S_2=\omega(\log n)$, however the size of the gap is unknown for $S_2=O(\log n)$~\cite{bansal2024pseudorandomdensitymatrices}.

Here, we introduce genuine magic witnesses derived from stabilizer R\'enyi entropies (SREs), which can be efficiently measured on quantum computers via Bell measurements.
Our witnesses robustly indicate that a given state indeed contains magic.
Notably, our witnesses fulfil the stronger notion of quantitative witnessing~\cite{eisert2007quantitative}, which gives quantitative predictions about magic monotones.
Beyond witnessing, we can also test magic, i.e. unambiguously distinguish states depending on their magic: Given an unknown $n$-qubit quantum state, we efficiently test whether it contains $O(\log n)$ or $\omega(\log n)$ magic, assuming the state has $2$-R\'enyi entropy $S_2=O(\log n)$. 
This also allows us to efficiently certify the number of T-states,  a common resource state for universal quantum computers, even when subject to mixed unital Clifford noise.
We show that magic is surprisingly robust in the presence of noise, and can survive even under exponentially strong depolarising noise. In particular, we can witness magic in noisy local random circuits up to a critical depth which is independent of qubit number.
As experimental demonstration, we witness the magic of noisy Clifford circuits doped with T-gates on the IonQ quantum computer, robustly certifying that magic has indeed been generated.
Our witness is also efficient for matrix product states (MPSs) and can be computed in $O(n\chi^3)$ time. We study the magic contained within entangled subsystems of the ground state of the transverse field Ising model, finding that magic survives even in the presence of entanglement.
Finally, we study the implications of our work for cryptography and pseudomagic. We show that for low-magic states to masquerade as high-magic states, there is a trade-off between entropy and magic: States with low entropy require $\omega(\log n)$ magic to mimic high-magic states, while high entropy states ($S_2=\omega(\log n)$) need no magic at all. This implies that entropy is a necessary resource for securely hiding information about magic from eavesdroppers.  We also make progress on the complexity of preparing pseudorandom density matrices, an important cryptographic notion, where we bound the number of T-gates   as $\omega(\log n)$ for $S_2=O(\log n)$.
Our work demonstrates that magic of mixed states can be efficiently characterized and can be highly robust to noise.

We summarize our main results in Fig.~\ref{fig:sketch}, while the complexity of testing and pseudomagic are shown in Tab.~\ref{tab:testing} and Tab.~\ref{tab:pseudoresource} respectively.

\begin{figure}[htbp]
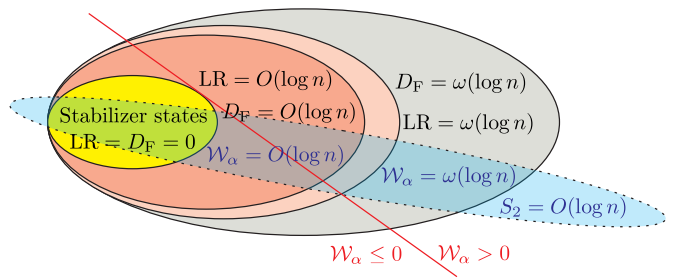

	\centering	
    \subfigimg[width=0.49\textwidth]{}{WitnessMagic.pdf}
	\caption{Schematic representation of $n$-qubit state space in terms of magic monotones, namely log-free robustness of magic $\text{LR}$ and stabilizer fidelity $D_\text{F}$. 
    Whenever witness $\mathcal{W}_\alpha>0$ ($\alpha\geq1/2)$ is positive (to the right of red line), it confirms that a state has magic, i.e. is not a stabilizer state. 
    For low entropy $S_2=O(\log n)$ (blue-shaded area), $\mathcal{W}_\alpha$ also gives quantitative predictions on $\text{LR}$ and $D_\text{F}$ and enables efficient testing of magic.  }
        \label{fig:sketch}
\end{figure}

\begin{table}[htbp]\centering
\begin{tabular}{| l|l| }
    \hline    
     Entropy &  Copies  \\
    \hline\hline
    $S_2=0$  & $O(\text{poly}(n))$~\cite{gross2021schur}\\\hline
 $S_2=O(\log n)$ & $O(\text{poly}(n))$ [this work] \\\hline
 $S_2=\omega(\log n)$ & $2^{\omega(\log n)}$~\cite{bansal2024pseudorandomdensitymatrices}\\ \hline
\end{tabular}
\caption{Number of copies of state $\rho$ needed to test whether $\rho$ has $M(\rho)=O(\log n)$ or $M(\rho)=\omega(\log n)$ magic. Complexity depends on $2$-R\'enyi entropy $S_2=-\ln\text{tr}(\rho^2)$, where we characterize magic $M$ by log-free robustness of magic and stabilizer fidelity.
}
\label{tab:testing}
\end{table}

\begin{table}[htbp]\centering
\begin{tabular}{| l|l| }
    \hline    
    Entropy &  $f(n)$ vs $g(n)$  \\
    \hline\hline
    $S_2=0$  & $\Theta(n)$ vs $\omega(\log n)$~\cite{gu2023little}\\\hline
 $S_2=O(\log n)$ & $\Theta(n)$ vs $\omega(\log n)$ [this work] \\\hline
 $S_2=\omega(\log n)$ & $\Theta(n)$ vs $0$~\cite{bansal2024pseudorandomdensitymatrices}\\ \hline
\end{tabular}
\caption{Pseudomagic gap $f(n)$ vs $g(n)$ between high-magic and low-magic state ensemble, depending on $2$-R\'enyi entropy $S_2=-\ln\text{tr}(\rho^2)$. Magic is characterized by the log-free robustness of magic and stabilizer fidelity.
}
\label{tab:pseudoresource}
\end{table}

\prlsection{Magic witness} \label{sec:sre}
We start by defining our magic witness for $n$-qubit state $\rho$ as
\begin{equation}\label{eq:witnessExp}
    \mathcal{W}_\alpha(\rho)= \frac{1}{1-\alpha}\ln A_\alpha(\rho) - \frac{1-2\alpha}{1-\alpha}S_2(\rho)\,,
\end{equation}
where $S_2(\rho)=-\ln\text{tr}(\rho^2)$ is the 2-R\'enyi entropy, and $A_\alpha(\rho)$ is the $\alpha$-moment of the Pauli spectrum~\cite{haug2023efficient}
\begin{equation}
    A_\alpha(\rho)=2^{-n}\sum_{P\in\mathcal{P}_n} \vert\text{tr}(\rho P)\vert^{2\alpha}\,.
\end{equation}
with $\mathcal{P}_n$ being the set of $4^n$ Pauli strings $P=\otimes_{j=1}^n \sigma^{a_j}$ which are tensor products of single-qubit Pauli operators $\sigma^0=I_1$, $\sigma^1=\sigma^x$, $\sigma^2=\sigma^z$, $\sigma^3=\sigma^y$. 
One can relate 
\begin{equation}
    \mathcal{W}_\alpha(\rho)\equiv M_\alpha(\rho)-2S_2(\rho)
\end{equation}
to the SRE for mixed states $M_\alpha=1/(1-\alpha)(\ln A_\alpha+S_2)$, which is the $\alpha$-R\'enyi entropy over the distribution $p_\rho(P)=2^{-n}\text{tr}(\rho P)^2/\text{tr}(\rho^2)$ up to a constant shift~\cite{tarabunga2025efficientmutualmagicmagic}. Further, for pure states $\rho=\ket{\psi}\bra{\psi}$, the witness is equivalent to the SRE $\mathcal{W}_\alpha(\ket{\psi})\equiv M_\alpha(\ket{\psi})$~\cite{leone2022stabilizer}. %

As we will see, $\mathcal{W}_\alpha$ allows us to witness the nonstabilizerness or magic of mixed quantum states, i.e. to what degree states cannot be generated from Clifford operations. Here, Clifford unitaries $U_\text{C}$ are unitaries generated from Hadamard, S-gate and CNOT gates. Pure stabilizer states $\ket{\psi_\text{C}}$ are states generated by applying $U_\text{C}$ onto the $\ket{0}$ state. Mixed stabilizer states are convex mixture of pure stabilizer states, i.e. $\rho_\text{C}=\sum_i p_i\vert\psi_\text{C}^{(i)}\rangle\langle\psi_\text{C}^{(i)}\vert$ with $p_i\geq0$.
$\mathcal{W}_\alpha$ has the following properties for all $\alpha$: i) Invariant under Clifford unitaries $U_\text{C}$, i.e. $\mathcal{W}_\alpha(U_\text{C} \rho U_\text{C}^\dagger)=\mathcal{W}_\alpha(\rho)$,
ii) Additive, i.e. $\mathcal{W}_\alpha(\rho\otimes\sigma)=\mathcal{W}_\alpha(\rho)+\mathcal{W}_\alpha(\sigma)$.
iii) $-2S_2(\rho)\leq\mathcal{W}_\alpha(\rho)\leq n\ln2-2S_2(\rho)$.
$\mathcal{W}_\alpha$ is not a magic monotone~\cite{veitch2014resource}, notably because it can be non-positive for magic states and increase under Clifford operations as shown in \SM{}~\ref{sec:monoton}.

However, $\mathcal{W}_\alpha(\rho)$ is a genuine witness of magic for mixed states for any $\alpha\geq1/2$. In particular, whenever $\mathcal{W}_\alpha(\rho)>0$, $\rho$ is not a stabilizer state $\rho_\text{C}$, i.e. cannot be written as a convex mixture of stabilizer states~\cite{dai2022detecting,warmuz2024magic,macedo2025witnessingmagicbellinequalities,macedo2025witnessing}. Further, any mixed stabilizer state $\rho_\text{C}$ must have $\mathcal{W}_\alpha(\rho_\text{C})\leq0$.
To see this, we note that
\begin{equation}
    \mathcal{W}_{1/2}(\rho) = 2 \ln \mathcal{D}(\rho)\,, 
\end{equation}
where $\mathcal{D}(\rho)\equiv A_{1/2}(\rho)$ is the stabilizer norm~\cite{campbell2011catalysis,howard2017robustness}\,.
The stabilizer norm is a witness of mixed-state magic as $\ln \mathcal{D}(\rho) >0 $ guarantees that $\rho$ is a nonstabilizer state~\cite{howard2017robustness,rall2019simulation}.
Then, via the hierarchy of R\'enyi entropies for $M_\alpha$, we have
\begin{equation}
    2\ln \mathcal{D}(\rho) \geq M_\alpha(\rho) - 2S_2(\rho)\equiv \mathcal{W}_\alpha(\rho) \quad (\alpha \geq 1/2)\,.
\end{equation}
Further, we define a variant of our witness which we call the filtered $\alpha$-magic witness (see \SM{}~\ref{sec:alt_witness})
\begin{equation}
    \Tilde{\mathcal{W}}_\alpha= \frac{1}{1-\alpha}\ln(\frac{2^nA_\alpha-1}{2^n-1}) + \frac{1-2\alpha}{1-\alpha}\ln(\frac{2^n \text{tr}(\rho^2)-1}{2^n-1})
\end{equation}
which is more sensitive to magic than $\mathcal{W}_\alpha$, though it has the same asymptotic scaling.

Notably, our magic witnesses provide bounds on genuine magic monotones.
First, the log-free robustness of magic~\cite{howard2017robustness,liu2022many} is given by
\begin{equation}\label{eq:robustness}
\text{LR}(\rho)=\text{min}_x\ln\left( \sum_i \vert x_i \vert: \rho=\sum_i x_i \vert\psi_{\text{C}}^{(i)}\rangle\langle\psi_{\text{C}}^{(i)}\vert\right)\,.
\end{equation}
In particular, we have
\begin{equation}\label{eq:robustnessbound}
    2\text{LR}(\rho) \geq 2 \ln\mathcal{D}(\rho)\geq \mathcal{W}_\alpha(\rho) \quad (\alpha \geq 1/2)\,,
\end{equation}
and similarly for $\Tilde{\mathcal{W}}_\alpha(\rho)$.
Additionally, $\mathcal{W}_\alpha$ also relates to another magic monotone, namely the stabilizer fidelity for mixed states~\cite{bravyi2019simulation,rubboli2024mixed}
\begin{equation} 
    D_\text{F}(\rho)=\min_{\sigma\in \text{STAB}}-\ln \mathcal{F}(\rho,\sigma)
\end{equation}
where we have the Uhlmann fidelity
    $\mathcal{F}(\rho,\sigma)=\text{tr}(\sqrt{\rho\sigma})^2$~\cite{baldwin2023efficiently}
and $D_\text{F}(\rho)\leq \text{LR}(\rho)$ (see \SM{}~\ref{sec:boundLR}). We find that whenever $D_\text{F}(\rho)=\omega(\log n)$, then $-\ln A_\alpha(\rho)=\omega(\log n)$ for $\alpha\geq2$ (see \SM{}~\ref{sec:testing} or Ref.~\cite{iyer2024tolerant}).

Importantly, our witnesses go beyond simply determining the presence of magic. In fact, $\mathcal{W}_\alpha$ are quantitative magic witnesses, i.e. 
its specific value provides quantitative information about the amount of magic present in the state~\cite{eisert2007quantitative}. 
Indeed, if $\mathcal{W}_\alpha(\rho)=\omega(\log n)$, then the bound in~\eqref{eq:robustnessbound} implies $\text{LR}(\rho)=\omega(\log n)$. Further, if $\mathcal{W}_\alpha(\rho)=O(\log n)$, then $D_\text{F}(\rho)=O(\log n)$. Therefore, our witnesses allow one not only to detect the magic, but also to infer whether the state is a low-magic or high-magic state, as quantified by a genuine magic monotone. %

Notably, for odd $\alpha$ one can efficiently measure $A_\alpha$~\cite{haug2023efficient} using Bell measurements, which to our knowledge makes $\mathcal{W}_\alpha$ and $\Tilde{\mathcal{W}}_\alpha$ (with odd $\alpha>1$) the first efficiently computable witnesses of magic:
\begin{theorem}[Efficient witness of magic (\SM{}~\ref{sec:measwitness})]\label{thm:witness}
For a given (mixed) $n$-qubit state $\rho$ and odd $\alpha$, there exist an efficient algorithm to measure $A_\alpha(\rho)$ to additive precision $\epsilon$ with failure probability $\delta$ using $O(\alpha\epsilon^{-2}\log(2/\delta))$ copies of $\rho$, $O(1)$ circuit depth, and $O(\alpha n\log(2/\delta))$ classical post-processing time. 
\end{theorem}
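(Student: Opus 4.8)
The plan is to reduce the estimation of $A_\alpha$ to the measurement of a single Clifford-diagonalizable observable on $2\alpha$ copies of $\rho$, and then to show that this observable has a particularly benign spectrum precisely when $\alpha$ is odd. First I would rewrite the moment as a many-copy expectation value. Since $\text{tr}(\rho P)^{2\alpha}=\text{tr}(\rho^{\otimes 2\alpha}P^{\otimes 2\alpha})$ and the Pauli strings factor over sites, grouping the $2\alpha$ copies site by site (a fixed relabeling of qubits under which the trace is invariant) gives
\begin{equation}
A_\alpha(\rho)=\text{tr}\!\left(\rho^{\otimes 2\alpha}\,\Pi_{2\alpha}\right),\qquad \Pi_{2\alpha}=2^{-n}\sum_{P\in\mathcal{P}_n}P^{\otimes 2\alpha}=\bigotimes_{i=1}^n \pi_{2\alpha}\,,
\end{equation}
where $\pi_{2\alpha}=\tfrac12\sum_{\sigma}\sigma^{\otimes 2\alpha}$ is a fixed, $n$-independent operator on the $2\alpha$ copies of a single qubit, the sum running over the four single-qubit Paulis. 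Thus $A_\alpha$ is the expectation of the product observable $\Pi_{2\alpha}$, and the whole task reduces to measuring this product site by site.

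The crux — and the reason the statement is restricted to odd $\alpha$ — is to control the spectrum of $\pi_{2\alpha}$. I would diagonalize it explicitly in the computational basis: acting on $|x\rangle$ with Hamming weight $w=|x|$, one finds $\pi_{2\alpha}|x\rangle=\tfrac12[(1+(-1)^w)|x\rangle+(1+i^{2\alpha}(-1)^w)|\bar{x}\rangle]$, so $\pi_{2\alpha}$ preserves each two-dimensional subspace $\mathrm{span}\{|x\rangle,|\bar{x}\rangle\}$ and acts there as a $2\times 2$ real symmetric matrix depending on $w$ and on $i^{2\alpha}=(-1)^\alpha$. For $\alpha$ odd one has $(-1)^\alpha=-1$, and the eigenvalues collapse to exactly $\pm1$ (identity on even-weight states, a signed swap $|x\rangle\leftrightarrow|\bar{x}\rangle$ on odd-weight states); for $\alpha$ even, $(-1)^\alpha=+1$ and an eigenvalue $2$ appears. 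Hence for odd $\alpha$ the operator $\Pi_{2\alpha}$ has spectrum $\{\pm1\}$ and is diagonalized by a fixed Clifford $C$ (a Bell-type change of basis, of depth $O(1)$ and $O(\alpha)$ gates) applied transversally across the $2\alpha$ copies of each site. This simultaneously exhibits $\Pi_{2\alpha}$ as a genuine Clifford observable and is what keeps the estimator bounded.

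The measurement protocol then follows directly: apply $C$ transversally at each of the $n$ sites (total depth $O(1)$ in $n$), measure all $2\alpha n$ qubits in the computational basis, and at each site $i$ extract the eigenvalue $\lambda_i\in\{\pm1\}$ from its $2\alpha$ outcomes by a simple parity function. The single-shot variable $X=\prod_{i=1}^n\lambda_i\in\{-1,+1\}$ then satisfies $\mathbb{E}[X]=\text{tr}(\rho^{\otimes2\alpha}\Pi_{2\alpha})=A_\alpha(\rho)$. Because $|X|\le1$, averaging $N$ independent shots and applying Hoeffding's inequality yields additive error $\epsilon$ with failure probability $\delta$ once $N=O(\epsilon^{-2}\log(2/\delta))$; each shot consumes $2\alpha$ copies, for a total of $O(\alpha\epsilon^{-2}\log(2/\delta))$ copies. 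Computing $X$ from the measured bits costs $O(\alpha)$ per site and $O(\alpha n)$ per shot, reproducing the stated classical post-processing bound up to the routine sampling bookkeeping.

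The main obstacle is the spectral step: boundedness of the estimator rests entirely on $\|\Pi_{2\alpha}\|=1$, which holds only for odd $\alpha$. Running the identical product estimator for even $\alpha$ would admit the eigenvalue $2$, making $|X|$ as large as $2^n$ and blowing the variance up exponentially. Thus the odd-$\alpha$ restriction is not a convenience but exactly the regime where the Bell-measurement estimator is well behaved, and the remaining details — writing $C$ explicitly and checking that it realizes the $\pm1$ readout — are routine once the spectrum of $\pi_{2\alpha}$ is pinned down.
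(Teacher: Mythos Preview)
Your proposal is correct and essentially identical to the paper's proof: your $\pi_{2\alpha}$ is the paper's $\zeta_\alpha=\tfrac12\sum_k(\sigma^k)^{\otimes 2\alpha}$, the paper likewise observes that its spectrum is $\{\pm1\}$ for odd $\alpha$ (versus $\{0,2\}$ for even $\alpha$), diagonalizes it by the Bell transformation $U_{\mathrm{Bell}}^{\otimes\alpha}$ applied pairwise across the $2\alpha$ copies at each site, and then invokes Hoeffding with $\Delta\lambda=2$. The only cosmetic difference is that the paper organizes the $2\alpha$ copies as $\alpha$ sequential two-copy Bell measurements (so only two copies need be held simultaneously), whereas you phrase it as one transversal Clifford on all $2\alpha$ copies; the estimator and all complexity counts coincide.
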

The algorithm is provided in \SM{}~\ref{sec:measwitness} or~\cite{haug2023efficient}, which can be implemented using $2$ copies of $\rho$ and Bell measurements. Note that the efficiency of $A_3$ was first shown in Ref.~\cite{gross2021schur}.

\prlsection{Efficient testing of magic}
While witnessing indicates the presence of magic, it is not a necessary criterion, as there are magic states that have non-positive witness. Moreover, the task of witnessing only determines whether the state contains magic, but usually cannot be directly related to quantitative values of a true magic monotone. 

Property tests ask a different question~\cite{buhrman2008quantum}: Does a given state have a property, or is far from it? 
In particular, a test for magic determines whether a state has low magic, or high magic,  where one measures magic in terms of magic monotones. Such tests have been demonstrated for pure states~\cite{iyer2024tolerant,bao2024tolerant,arunachalam2024note,grewal2023improved,leone2024learning}, yet were absent for mixed states with respect to magic monotones.
Witnesses can usually not be used for testing, as they may not detect magic for some classes of high magic states. 
Yet, by leveraging our quantitative witness, we design an efficient algorithm to test for magic which reliably distinguishes states depending on their magic.  Here, we give an efficient test for magic whenever $S_2(\rho)=O(\log n)$:
\begin{theorem}[Efficient testing of magic]\label{thm:testing}
Let $\rho$ be an $n$-qubit state with $S_2(\rho)=O(\log n)$ where it is promised that
\begin{align*}
\mathrm{either}\quad (a)& \,\,\mathrm{LR}(\rho)=O(\log n) \,\,\mathrm{and}\,\,D_\mathrm{F}(\rho)=O(\log n) \,,\\
\mathrm{or}\quad (b)& \,\, \mathrm{LR}(\rho)=\omega(\log n)\,\,\mathrm{and}\,\, D_\mathrm{F}(\rho)=\omega(\log n)\,.
\end{align*} 
Then, there exist an efficient quantum algorithm to distinguish case ($a$) and ($b$) using $\mathrm{poly}(n)$ copies of $\rho$ with high probability. 
\end{theorem}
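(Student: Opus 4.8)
The plan is to reduce the testing problem to estimating a single efficiently measurable quantity---the Pauli-spectrum moment $A_\alpha(\rho)$ for a fixed odd $\alpha\geq 3$ (say $\alpha=3$)---and to show that the two promise cases force $A_\alpha$ into well-separated regimes: inverse-polynomial in case $(a)$ versus super-polynomially small in case $(b)$. Since $A_\alpha$ with odd $\alpha$ is measurable to additive precision $\epsilon$ with $O(\alpha\epsilon^{-2}\log(2/\delta))$ copies by Theorem~\ref{thm:witness}, an additive estimate of $A_\alpha$ to precision $1/\mathrm{poly}(n)$ followed by thresholding will decide between $(a)$ and $(b)$. Crucially, thresholding on the raw moment $A_\alpha$ rather than on $\ln A_\alpha$ avoids the logarithmic sensitivity that would otherwise blow up when $A_\alpha$ is tiny.

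First I would make the gap explicit. For $\alpha>1$ the witness rearranges to
\begin{equation}
\mathcal{W}_\alpha(\rho)=\frac{-\ln A_\alpha(\rho)}{\alpha-1}-\frac{2\alpha-1}{\alpha-1}\,S_2(\rho)\,.
\end{equation}
In case $(b)$, the promise $D_\mathrm{F}(\rho)=\omega(\log n)$ together with the stated implication ``$D_\mathrm{F}=\omega(\log n)\Rightarrow -\ln A_\alpha=\omega(\log n)$ for $\alpha\geq 2$'' gives $-\ln A_\alpha(\rho)=\omega(\log n)$, i.e. $A_\alpha(\rho)=n^{-\omega(1)}$. In case $(a)$ I would instead invoke the witness inequality $\mathcal{W}_\alpha(\rho)\leq 2\,\mathrm{LR}(\rho)$ from \eqref{eq:robustnessbound}, which upon rearranging the display above yields
\begin{equation}
-\ln A_\alpha(\rho)\leq 2(\alpha-1)\,\mathrm{LR}(\rho)+(2\alpha-1)\,S_2(\rho)\,.
\end{equation}
With $\mathrm{LR}(\rho)=O(\log n)$ and the entropy promise $S_2(\rho)=O(\log n)$, the right-hand side is at most $K\log n$ for an explicit constant $K$ fixed by the big-$O$ constants, so $A_\alpha(\rho)\geq n^{-K}$. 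Note that $D_\mathrm{F}\leq\mathrm{LR}$ makes the two conditions in each promise mutually consistent: case $(a)$ only needs the $\mathrm{LR}$ bound and case $(b)$ only the $D_\mathrm{F}$ bound.

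Given this separation, the algorithm is: fix the promise constants, set the threshold $\tau=\tfrac12 n^{-K}$, estimate $A_\alpha(\rho)$ to additive precision $\epsilon=\tfrac14 n^{-K}$ using $O(\alpha\epsilon^{-2}\log(2/\delta))=\mathrm{poly}(n)$ copies and Bell measurements via Theorem~\ref{thm:witness}, and output $(a)$ if the estimate exceeds $\tau$ and $(b)$ otherwise. For all sufficiently large $n$, case $(a)$ gives $A_\alpha\geq n^{-K}>\tau+\epsilon$ while case $(b)$ gives $A_\alpha=n^{-\omega(1)}<\tau-\epsilon$, so the thresholded estimate is correct with probability $1-\delta$. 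The algorithm only ever measures $A_\alpha$; the entropy bound enters solely through the analysis.

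The main obstacle is the additive-precision bottleneck, and it is resolved precisely by the entropy promise. To distinguish an inverse-polynomial value from a super-polynomially small one, the additive precision $\epsilon$ must fall below the case-$(a)$ lower bound on $A_\alpha$; the bound above shows this lower bound stays inverse-polynomial exactly because $S_2=O(\log n)$ caps the $(2\alpha-1)S_2$ term. Were $S_2=\omega(\log n)$, the same term would permit $A_\alpha$ to be super-polynomially small even in the low-magic case, collapsing the gap and forcing $\epsilon$---hence the copy count---to be super-polynomial, consistent with the known hardness in that regime. A remaining subtlety is that ``$O(\log n)$ versus $\omega(\log n)$'' is an asymptotic promise, so the degree of the polynomial copy count and the value of $\tau$ depend on the big-$O$ constants of the promised family; the test is calibrated to these constants.
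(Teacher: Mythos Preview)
Your proposal is correct and follows essentially the same approach as the paper: bound $-\ln A_3$ from above via $\mathcal{W}_\alpha\leq 2\,\mathrm{LR}$ in case $(a)$ and from below via the $D_\mathrm{F}$ implication in case $(b)$, then threshold an additive estimate of $A_3$ obtained from Bell measurements. Your write-up is in fact somewhat cleaner than the paper's in making the threshold and precision explicit and in isolating where the entropy promise enters.
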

We sketch the proof idea in the following, with the actual steps shown in \SM{}~\ref{sec:testing}.
First, we show that case ($a$) has $\mathcal{W}_3(\rho)=O(\log n)$, while case ($b$) has $\mathcal{W}_3(\rho)=\omega (\log n)$. Case ($a$) can be shown from our bound on $\text{LR}$ of~\eqref{eq:robustnessbound}, while case ($b$) is derived using bounds on the stabilizer fidelity of Ref.~\cite{iyer2024tolerant}.
Then, efficiency follows from  the algorithm to estimate $A_3(\rho)$ from Ref.~\cite{haug2023efficient} and Hoeffding's inequality.

Our test is tight with respect to $S_2$, as there can be no efficient tests when $S_2=\omega(\log n)$~\cite{bansal2024pseudorandomdensitymatrices}.
Note that for technical reasons we define magic in terms of both $\text{LR}$ and $D_\text{F}$.
Likely, our statements can be relaxed to $D_\text{F}$ only. 
The missing step is to find a lower on $D_\text{F}$ in terms of $\mathcal{W}_3$. Note that such bounds have been proven considering a relaxed definition of $D_\text{F}$ which is limited to pure stabilizer states only~\cite{iyer2024tolerant}, however this measure is not a magic monotone.

\prlsection{Certifying noisy T-states}
To run universal quantum computers, one requires magic resource states as input, such as T-states $\ket{T}=\frac{1}{\sqrt{2}}(\ket{0}+e^{-i\pi/4}\ket{1})$~\cite{bravyi2005universal}. Usually, as they cannot be directly prepared in a fault-tolerant manner due to the Eastin-Knill theorem~\cite{eastin2009restrictions}, they have to be distilled from many noisy T-states~\cite{bravyi2005universal}. T-states are a valuable resource that is expensive to generate. Thus, it is essential to be able to verify their correct preparation, especially when they are subject to noise.
We now show that one can indeed certify the number of T-states even when they are subject to a quite general class of noise channels, namely mixed unital Clifford channels: They are are given by $\Lambda_\text{C}(\rho)=\sum_{i} p_i U_\text{C}^{(i)} \rho {U_\text{C}^{(i)}}^\dagger$
where $\sum_i p_i=1$, $p_i\geq0$, and $U_\text{C}^{(i)}$ are Clifford unitaries. This  includes Pauli channels $\sum_i p_i P_i\rho P_i^\dagger$ with Pauli $P_i$ as a special case.
\begin{proposition}[Certifying noisy magic states (\SM{}~\ref{sec:cliffUnital})]\label{thm:certify}
There is an efficient quantum algorithm to certify whether given noisy $n$-qubit quantum state $\rho_t=\Lambda_\text{C}((\ket{T}\bra{T})^{\otimes t}\otimes (\ket{0} \bra{0})^{\otimes(n-t)})$ contains
either $(a)$ $t=O(\log n)$ or $(b)$ $t=\omega(\log n)$ T-states $\ket{T}=\frac{1}{\sqrt{2}}(\ket{0}+e^{-i\pi/4}\ket{1})$ when it is subject to arbitrary $n$-qubit mixed unital Clifford channels $\Lambda_\text{C}(\rho)$ and $S_2(\rho_t)=O(\log n)$. 
\end{proposition}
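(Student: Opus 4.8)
The plan is to reduce the certification to estimating a single scalar, the Pauli-spectrum moment $A_3(\rho_t)$, which Theorem~\ref{thm:witness} already shows how to measure efficiently from Bell measurements on two copies. The algorithm I would use estimates $A_3(\rho_t)$ to additive precision $\epsilon=1/\text{poly}(n)$ with $\text{poly}(n)$ copies and reports case $(a)$ or $(b)$ according to whether the estimate lies above or below a threshold of order $1/\text{poly}(n)$. All the work is then in proving that $A_3(\rho_t)$ is polynomially large in case $(a)$ but super-polynomially small in case $(b)$, so that the two regimes are separated by a detectable gap.

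The first ingredient I would establish is a monotonicity lemma: for any mixed unital Clifford channel $\Lambda_\text{C}(\rho)=\sum_i p_i U_\text{C}^{(i)}\rho {U_\text{C}^{(i)}}^\dagger$ and any $\alpha\geq 1/2$,
\begin{equation}
    A_\alpha(\Lambda_\text{C}(\rho))\leq A_\alpha(\rho)\,.
\end{equation}
This follows by writing $\text{tr}(\Lambda_\text{C}(\rho)P)=\sum_i p_i\, s_i(P)\,\text{tr}(\rho\, Q_i(P))$ with ${U_\text{C}^{(i)}}^\dagger P U_\text{C}^{(i)}=s_i(P)Q_i(P)$ for a sign $s_i(P)=\pm1$ and Pauli $Q_i(P)$, applying Jensen's inequality to the convex map $x\mapsto x^{2\alpha}$, and using that $P\mapsto Q_i(P)$ permutes $\mathcal{P}_n$ for each fixed $i$. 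For the noiseless input I would evaluate the moment directly from the single-qubit Pauli expectations of $\ket{T}$ and $\ket{0}$, using that $A_\alpha$ factorizes over tensor products, to obtain $A_3((\ket{T}\bra{T})^{\otimes t}\otimes(\ket{0}\bra{0})^{\otimes(n-t)})=(5/8)^t$.

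Case $(b)$ is then immediate: monotonicity gives $A_3(\rho_t)\leq(5/8)^t=e^{-\omega(\log n)}=n^{-\omega(1)}$, which is super-polynomially small; equivalently $\mathcal{W}_3(\rho_t)=\omega(\log n)$ and hence $\text{LR}(\rho_t)=\omega(\log n)$ via~\eqref{eq:robustnessbound}. Case $(a)$ is where the entropy promise enters. Since the log-free robustness is a magic monotone (Clifford-invariant and convex under mixing) and sub-additive on tensor products, I would bound $\text{LR}(\rho_t)\leq \text{LR}((\ket{T}\bra{T})^{\otimes t}\otimes(\ket{0}\bra{0})^{\otimes(n-t)})\leq t\,\text{LR}(\ket{T})=O(\log n)$. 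Combining the witness inequality $\mathcal{W}_3\leq 2\,\text{LR}$ with the identity $-\ln A_3=2\mathcal{W}_3+5S_2$ and the promise $S_2(\rho_t)=O(\log n)$ then gives $-\ln A_3(\rho_t)=O(\log n)$, i.e. $A_3(\rho_t)\geq 1/\text{poly}(n)$. This places the two cases on opposite sides of the threshold, and Hoeffding's inequality controls the sample complexity as in Theorem~\ref{thm:testing}.

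The main obstacle is the lower bound in case $(a)$ and, specifically, the indispensable role of the entropy promise. The monotonicity lemma makes case $(b)$ robust no matter how strong the noise is, but it also shows that noise can only shrink $A_3$; a highly mixed state with small $t$ could a priori have an $A_3$ as tiny as a genuine high-magic state. This is precisely why unrestricted noise defeats certification (the fully depolarizing channel is itself a mixed unital Clifford channel yet erases all magic), and it is the promise $S_2=O(\log n)$ that excludes such behaviour. The delicate point is therefore to show that the entropy bound, fed through the robustness monotone and the witness inequality, pins $A_3(\rho_t)$ above $1/\text{poly}(n)$ in case $(a)$; this is also what renders the result tight in $S_2$. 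By contrast, I expect no difficulty in the case $(b)$ direction or in the concentration analysis.
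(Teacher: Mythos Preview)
Your proposal is correct and mirrors the paper's proof almost step for step: bound case $(a)$ via $\text{LR}(\rho_t)\le t\,\text{LR}(\ket{T})$, the witness inequality $\mathcal{W}_3\le 2\,\text{LR}$, and the entropy promise to force $A_3(\rho_t)\ge 1/\text{poly}(n)$; handle case $(b)$ via the explicit value $A_3(\ket{\psi_t})=(5/8)^t$ together with the monotonicity $A_\alpha(\Lambda_\text{C}(\rho))\le A_\alpha(\rho)$; and finish with Hoeffding for the estimator of Theorem~\ref{thm:witness}.

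The one place you deviate is the proof of the monotonicity lemma. The paper expands $\text{tr}(\Lambda_\text{C}(\rho)P)^{2\alpha}$ as a multi-sum over $2\alpha$ indices and then applies H\"older's inequality to the product $\prod_s|\text{tr}(\rho_{i_s}P)|$, which works for integer $\alpha>1$. Your route---Jensen applied to the convex map $x\mapsto|x|^{2\alpha}$ on the convex combination $\sum_i p_i\,s_i(P)\,\text{tr}(\rho Q_i(P))$, followed by the observation that $P\mapsto Q_i(P)$ permutes $\mathcal{P}_n$---is both shorter and strictly more general, as it covers every real $\alpha\ge 1/2$ rather than just integers. Neither approach needs the extra generality here (only $\alpha=3$ is used), but your argument is the cleaner of the two.
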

The proof for Prop.~\ref{thm:certify} is provided in \SM{}~\ref{sec:cliffUnital} and gives the explicit testing algorithm. To prove Prop.~\ref{thm:certify}, we first show that the robustness of magic upper bounds the number of T-states in terms of our witness. 
Then, we  also find a lower bound on the number of T-states via our witness. This involves showing that $-\ln A_3(\rho)$ can only increase under mixed unital Clifford channels. Finally, the number of T-states can be efficiently bounded by estimating $-\ln A_3(\rho)$, which we show via Hoeffding's inequality.

\prlsection{Noise-robustness of many-body states}
We have shown how to witness and test magic, where the efficient measurement requires a bounded entropy $S_2=O(\log n)$. The condition on entropy arises due to inherent inefficiency of testing highly mixed states, i.e. $S_2=\omega(\log n)$~\cite{wright2016learn,bansal2024pseudorandomdensitymatrices}. But does this imply that magic is destroyed in such high entropy states?
It turns out this is not the case: We find that even for exponentially strong noise and high entropy, magic can still survive.

Let us consider typical states drawn from the Haar measure are known to be highly magical~\cite{turkeshi2023pauli}. They can be approximated by random local circuits~\cite{brandao2016local,arute2019quantum}, which have been implemented on noisy intermediate scale quantum (NISQ) computers~\cite{arute2019quantum,bharti2022noisy,wang2021noise}.
However, such noisy quantum computers are subject to substantial noise. For simplicity, let us consider global depolarisation noise $\Gamma(\rho)=(1-p)\rho+p I_n/2^n$ with noise probability $p$. Further, we assume that the system is subject to noise with exponentially high probability, i.e.
\begin{equation}
    p= 1-2^{-\beta n}\,.
\end{equation}
with some factor $\beta$~\cite{bharti2022noisy,wang2021noise}. 
Now, by examining the scaling of the witness in the limit $n\gg1$ in \SM{}~\ref{sec:witness_noise}, we find that our filtered witness $\Tilde{\mathcal{W}}_\alpha>0$ whenever $\beta<1/2$.
This implies that magic can be highly robust to noise, surviving even for exponential noise and large number of qubits $n$. Further, we note that our efficient witness $\Tilde{\mathcal{W}}_3$ can certify magic with the same sensitivity as the intractable stabilizer norm $\mathcal{D}$.

While we so far assumed a simplified state and noise model, we find numerically that similar robustness holds also in settings closer to current experiments. 
In particular, we consider  random local circuits under local depolarising noise, a typical setup for NISQ devices~\cite{arute2019quantum,bharti2022noisy,chen2023complexity}.
In Fig.~\ref{fig:localdepol}, we study $d$ layers of random single-qubit rotations with CNOT gates arranged in a linear chain, where after each gate we apply single-qubit depolarisation noise $\Gamma(\rho)=(1-p)\rho+p \text{tr}_1(\rho) I_1/2$ with probability $p$ (see \SM{}~\ref{sec:witness_noise_local}). 
Curiously, we find in Fig.~\ref{fig:localdepol}a that for constant $p$, $\Tilde{\mathcal{W}}_\alpha$ is positive until a critical circuit depth $\tilde{d}_\text{c}$, which is independent of qubit number $n$. We plot $\tilde{d}_\text{c}$ in in Fig.~\ref{fig:localdepol}b against $p$, finding nearly linear scaling as $d_\text{c}\propto p^{-\eta}$, with $\eta\approx 0.96$. 
Thus, NISQ devices can robustly generate magic that is also efficiently verifiable. 
\begin{figure}[htbp]
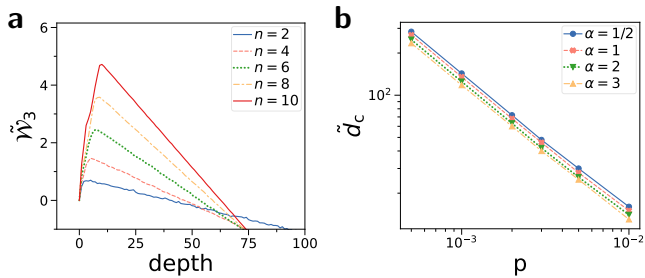

	\centering	
    \subfigimg[width=0.24\textwidth]{a}{filt_witness3WTN10d100r1c0s0_002.pdf}\hfill%
    \subfigimg[width=0.24\textwidth]{b}{depthcritfilteredWTN8d100r10c0s0_01.pdf}
    \caption{Magic of random local circuit of $d$ layers with local depolarising noise probability $p$. \idg{a} Efficient filtered magic witness $\Tilde{\mathcal{W}}_{3}$ against $d$ for different $n$ and $p$. \idg{b} Critical circuit depth $\tilde{d}_\text{c}$ where $\Tilde{\mathcal{W}}_\alpha$ becomes negative for different $p$ and $n=8$. By fitting we find $d_\text{c}\propto p^{-\eta}$ with $\eta\approx 0.96$. Each datapoint is averaged over 10 random circuit instances.
	}
	\label{fig:localdepol}
\end{figure}

\prlsection{Magic of noisy quantum computers} 
Now, we demonstrate our witness and the noise-robustness of magic in experiment. In Fig.~\ref{fig:Ionq}, we characterize noisy quantum states on the IonQ quantum computer. We study random random Clifford circuits $U_\text{C}$ interleaved with $N_\text{T}$ T-gates $T=\text{diag}(1,e^{-i\pi/4})$~\cite{haferkamp2022efficient,haug2023efficient}
\begin{equation}\label{eq:random_CliffT}
\ket{\psi(N_\text{T})}=U_\text{C}^{(0)}[\prod_{k=1}^{N_\text{T}} (T\otimes I_{n-1}) U_\text{C}^{(k)} ]\ket{0}\,.
\end{equation}
For experimental convenience, we compressed the circuits such that for all $N_\text{T}$ we have the same circuit depth and thus similar purity.
We show experimentally measured $\mathcal{W}_3$ in Fig.~\ref{fig:Ionq}a. The experiment closely matches our simulation assuming global depolarising noise $\Gamma(\rho)=(1-p)\rho+pI/2^n$ with noise strength $p$, where we extract $p$ from the experiment for each circuit. Although we have significant noise $p\approx0.2$, we can experimentally certify the presence of genuine mixed-state magic for all $N_\text{T}>0$.

In Fig.~\ref{fig:Ionq}b, we study our magic witness $\mathcal{W}_\alpha$ for different $\alpha$ and the log-free robustness of magic $\text{LR}$ by simulating the circuits of the experiment with depolarisation noise. We confirm the inequalities $2\text{LR}\geq \mathcal{W}_{1/2}\geq \mathcal{W}_{1}\geq\mathcal{W}_{2}\geq \mathcal{W}_{3}$. Notably, all measures consistently certify the magic of the noisy states. One can also equivalently use the filtered witness, which we find is more sensitive to T-gates.
We also experimentally certify the magic of important magic resource states such as the T-state in \SM{}~\ref{sec:experiment_sup}.
\begin{figure}[htbp]
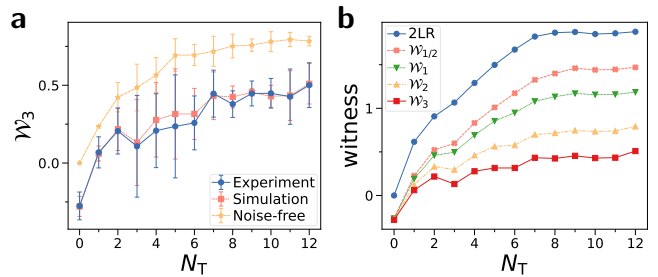

	\centering	
	\subfigimg[width=0.24\textwidth]{a}{Witness3IonQCombIonqMagicN3d4t3n1000i0r121.pdf}\hfill
    \subfigimg[width=0.24\textwidth]{b}{WitnessSim3CombIonqMagicN3d4t3n1000i0r121.pdf}
    \caption{\idg{a} Magic witness $\mathcal{W}_3$ measured on the IonQ quantum computer for Clifford circuits doped with $N_\text{T}$ T-gates. We have $n=3$ qubits and estimate an effective average global depolarisation noise $p\approx 0.2$ from the purity. We show experimental result in blue and simulation with global depolarisation noise in orange, while yellow line is a simulation for noise-free states. We average over 10 random initialization of the circuit. 
    \idg{b} Simulation of $\mathcal{W}_\alpha$ for different $\alpha$ and log-free robustness of magic $\text{LR}$ with global depolarisation noise with parameters of experiment.
	}
	\label{fig:Ionq}
\end{figure}

\prlsection{Witness magic of many-body systems}
So far we considered witness as tool for quantum computers. Beyond, magic has become also an important method to study the complexity of many-body quantum systems, both numerically and analytically~\cite{leone2022stabilizer,haug2022quantifying,haug2023stabilizer,lami2023quantum,tarabunga2024nonstabilizernessmps,turkeshi2023pauli,tarabunga2025efficientmutualmagicmagic}. 
While for pure states genuine magic monotones have been used extensively~\cite{leone2024stabilizer}, so far for mixed states no efficient way to detect magic has been known. This has made the study of magic in entangled subsystem of extensive quantum systems difficult. For example, the interplay between entanglement and (non-local) magic has remained poorly understood~\cite{korbany2025longrangenonstabilizerness,wei2025longrangenonstabilizerness}.

We now find that our witness $\mathcal{W}_\alpha$ with integer $\alpha>0$, as well as its filtered variant $\tilde{\mathcal{W}}_\alpha$, can be efficiently computed for any mixed subsystem of MPS:
\begin{theorem}[Efficient witness of magic for MPS (\cite{haug2022quantifying,haug2023stabilizer,lami2023quantum,tarabunga2024nonstabilizernessmps})]\label{thm:witnessMPS}
Given MPS $\ket{\psi}$ with bond dimension $\chi$ and its $n$-qubit subsystem $\rho=\text{tr}_{\bar{n}}(\ket{\psi}\bra{\psi})$, then there exist efficient classical algorithms to compute $\mathcal{W}_\alpha(\rho)$ and $\Tilde{\mathcal{W}}_\alpha(\rho)$ for $\alpha=1$ in $O(n\chi^3\epsilon^{-2})$ time and $\epsilon$ additive precision, as well as exactly for integer $\alpha>1$ in $O(n\chi^{6\alpha})$ time.
\end{theorem}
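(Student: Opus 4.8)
The plan is to reduce the theorem to the evaluation of two scalars and then to contract each as a tensor network. Both $\mathcal{W}_\alpha$ and $\Tilde{\mathcal{W}}_\alpha$ are explicit functions of the moment $A_\alpha(\rho)$ and the purity $\text{tr}(\rho^2)$ (recall $S_2=-\ln\text{tr}(\rho^2)$, and note the special value $A_1(\rho)=\text{tr}(\rho^2)$, which follows from the Pauli completeness of $\rho$). It therefore suffices to compute these two numbers for the subsystem $\rho=\text{tr}_{\bar n}(\ket{\psi}\bra{\psi})$. The purity, and hence $S_2$, I would obtain exactly by the standard transfer-matrix contraction of two copies of the ket MPS with two copies of the bra, closing the physical legs on the subsystem and on $\bar n$ appropriately; this is a routine $\text{poly}(n,\chi)$ computation.

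For integer $\alpha>1$ the main tool is the replica (completeness) trick. Since a Pauli $P$ supported on the subsystem satisfies $\text{tr}(\rho P)=\bra{\psi}(P\otimes I_{\bar n})\ket{\psi}$, I would write
\begin{equation*}
A_\alpha(\rho)=2^{-n}\,\text{tr}\Big[(\ket{\psi}\bra{\psi})^{\otimes 2\alpha}\ \bigotimes_{j\in\mathrm{sub}} T_\alpha\ \bigotimes_{j\in\bar n} I^{\otimes 2\alpha}\Big],\qquad T_\alpha=\sum_{a=0}^{3}(\sigma^a)^{\otimes 2\alpha},
\end{equation*}
a network built from $2\alpha$ copies of the ket MPS and $2\alpha$ copies of its conjugate, fused at each subsystem site by the local replica tensor $T_\alpha$ and by the plain normalization tensor on $\bar n$. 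The crucial observation is that $T_\alpha$ is a sum of only four product operators indexed by $a$, so for each fixed $a$ the $2\alpha$ replicas decouple into identical single-copy transfer matrices $E_j^{(a)}=\sum_{s,s'}(\sigma^a)_{ss'}\,A_j^{s}\otimes(A_j^{s'})^{*}$ of size $\chi^2\times\chi^2$. Contracting site by site then amounts to propagating a boundary vector on the $\chi^{4\alpha}$-dimensional replica bond space under the combined operator $\sum_{a}(E_j^{(a)})^{\otimes 2\alpha}$, applying each $E_j^{(a)}$ to one replica slot at a time; this keeps the per-site cost polynomial and yields the stated $O(n\chi^{6\alpha})$ runtime.

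For $\alpha=1$ the prefactor $1/(1-\alpha)$ is singular, so a single replica contraction fails and I would instead take the $\alpha\to1$ limit. There $\mathcal{W}_1$ (and likewise $\Tilde{\mathcal{W}}_1$) reduces, up to a fixed shift expressible through $S_2$, to the Shannon entropy of the Pauli distribution $p_\rho(P)=2^{-n}\text{tr}(\rho P)^2/\text{tr}(\rho^2)$. This entropy I would estimate by perfect Pauli sampling of the MPS: one draws strings $P\sim p_\rho$ by sweeping along the chain and evaluating the conditional marginals through transfer matrices in $O(n\chi^3)$ per sample, and averages the single-shot estimator $-\ln p_\rho(P)$. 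Taking $O(\epsilon^{-2})$ samples reaches additive precision $\epsilon$ and gives the $O(n\chi^3\epsilon^{-2})$ runtime.

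I expect the principal obstacle to lie in the bookkeeping of the replica contraction: a naive treatment manipulates $\chi^{4\alpha}\times\chi^{4\alpha}$ transfer matrices and would blow up the exponent, so one must exploit the factorized structure of $T_\alpha$ and fix a contraction order that never forms an object larger than the stated bound. For the $\alpha=1$ branch, the corresponding technical point is to control the variance of the estimator $-\ln p_\rho(P)$ so that $O(\epsilon^{-2})$ draws genuinely suffice for additive accuracy $\epsilon$.
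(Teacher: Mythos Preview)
Your proposal is correct and matches the paper's approach: the paper itself gives no detailed proof but simply cites the two techniques you describe—the MPS replica trick for integer $\alpha>1$ and perfect Pauli sampling for $\alpha=1$—and states that the extension from $A_\alpha$ to the witnesses is straightforward. Your write-up is in fact more explicit than the paper's, which defers the contraction details and the $O(n\chi^{6\alpha})$ and $O(n\chi^3\epsilon^{-2})$ cost bounds entirely to the cited references.
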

Here, $\text{tr}_{\bar{n}}(.)$ is the partial trace over the complement of $n$ qubits, which we assume to be connected to the left or right boundary of the MPS.
The algorithms for MPS used in Thm.~\ref{thm:witnessMPS} have been proposed previously to compute $(1-\alpha)^{-1}\ln A_\alpha$ and can be straightforwardly extended to $\mathcal{W}_\alpha$.
In particular, 
\begin{equation}
\mathcal{W}_1(\rho)=-\sum_{P\in\mathcal{P}_n} 2^{-n}\frac{\text{tr}(\rho P)^2}{\text{tr}(\rho^2)} \ln(\text{tr}(\rho P)^2)-2S_2(\rho)
\end{equation}
can be efficiently calculated using Pauli sampling~\cite{haug2023stabilizer,lami2023quantum}, and integer $\alpha>1$ can be computed exactly using MPS replica tricks~\cite{haug2022quantifying,tarabunga2024nonstabilizernessmps}. Approximations via various Monte-Carlo sampling approaches can be used as well~\cite{tarabunga2025efficientmutualmagicmagic}.

We now apply our witness to study the magic of subsystems of many-body states $\ket{\psi}$. As subsystems $\rho=\text{tr}_A(\ket{\psi})$ are usually entangled and thus mixed, it has been difficult to determine whether they contain magic or it has been destroyed by the partial trace. Using our witness~\eqref{eq:witnessExp}, we can now efficiently detect the presence of magic in such subsystems. In particular, in Fig.~\ref{fig:witness_mps} we consider the  ground state of the transverse-field Ising model
\begin{equation}\label{eq:ising}
    H_\text{TFIM}=-\sum_{k=1}^{n-1}\sigma^x_k\sigma^x_{k+1}-h\sum_{k=1}^n \sigma_k^z
\end{equation}
where we have the field $h$ and choose open boundary conditions. 
The ground state is known to be a nonstabilizer state except at $h=0$ and $h\to \infty$. We compute $\mathcal{W}_2$ in subsystems of length $\ell$ with $\rho_\ell=\text{tr}_{\bar{\ell}}(\ket{\psi})$, where $\text{tr}_{\bar{\ell}}(.)$ is the partial trace over the complement of $\ell$. Note that, if $\mathcal{W}_2$ detects the magic in a subsystem, this also implies that the full state is a nonstabilizer state. 
In Fig.~\ref{fig:witness_mps}a, we find that $\mathcal{W}_2$ increases with subsystem size $\ell$. Further, $\mathcal{W}_2$ increases with field $h$ until the critical field $h=1$, and then decreases. Notably, for $h<1$ the witness becomes greater than zero a particular $\ell\geq \ell_\text{c}$, which in \SM{}~\ref{sec:manybody} we determine to follow $\ell_\text{c}\approx n^{-h+1}$. In contrast, for any $h\leq1$ the witness is non-zero already for constant $\ell$. This is the result of the different entanglement structure for $h<1$ and $h>1$.
In Fig.~\ref{fig:witness_mps}b, we plot $\mathcal{W}_2$ for $h=1$ for different $n$. We find that the witness is a robust feature that is nearly independent of $n$ and increases linearly with $\ell$. 

In general, when one considered a subsystem of a pure state, the second term in~\eqref{eq:witnessExp} corresponds to the entanglement of the subsystem. Therefore, the witness consists of two competing contributions: one which relates to magic in $\ln A_\alpha$ and another from the entanglement entropy $S_2$. The witness successfully detects magic when $\ln A_\alpha$ dominates. Given that ground states of local, gapped Hamiltonians exhibit area law scaling for entanglement and volume law scaling for magic, the witness proves particularly effective in detecting magic within these physically relevant systems.

\begin{figure}[htbp]
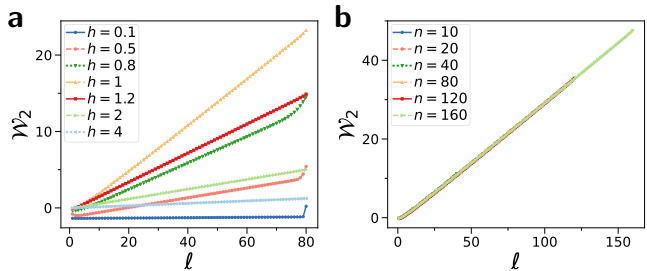

	\centering	
	\subfigimg[width=0.24\textwidth]{a}{witnessPN80D12m3p40J1h4_0g0S100P0C1R-1c12.pdf}\hfill
    \subfigimg[width=0.24\textwidth]{b}{witnessPN160D10m3p80J1h1g0S100P0C1R-1c12.pdf}
    \caption{2-SRE magic witness $\mathcal{W}_2(\rho_\ell)$ for reduced density matrix $\rho_\ell$ of size $\ell$ of the groundstate of the TFIM~\eqref{eq:ising} for different fields $h$.
    \idg{a} $\mathcal{W}_2$ against $\ell$ for different $h$ with $n=80$ qubits and $\chi=10$.
    \idg{b} $\mathcal{W}_2$ against $\ell$ for different $n$ for $h=1$ and $\chi=10$.
	}
	\label{fig:witness_mps}
\end{figure}

\prlsection{Pseudorandom density matrices and pseudomagic}
Finally, we discuss the implications of our results in quantum cryptography~\cite{ji2018pseudorandom}.
A key task in cryptography is to efficiently create complex quantum states with as little resources as possible~\cite{bouland2022quantum,haug2023pseudorandom}. 
In particular, random quantum states are required for many applications, yet generating Haar random states and random mixed states requires exponentially deep circuits in general~\cite{nielsen2011quantum}. However, true randomness is often not required. 
Recently, pseudorandom states (PRSs) and the more general pseudorandom density matrices (PRDMs) have been proposed. PRDMs can be efficiently prepared, yet are indistinguishable from truly random mixed states for any efficient quantum algorithm~\cite{bansal2024pseudorandomdensitymatrices}. They allow for the low-cost preparation of pseudorandomness, with only a very low circuit depth~\cite{schuster2024randomunitariesextremelylow}.
However, what are the minimal resources needed to generate pseudorandomness?
For entropy $S_2=0$, it has been shown that one requires $\Omega(n)$ T-gates to prepare PRDMs~\cite{grewal2023improved}, however for the mixed case no general bound has been known. 
Here, we find that for PRDMs with $S_2=O(\log n)$, at least $\omega(\log n)$ T-gates are required:
\begin{proposition}[T-gates for PRDM (\SM{}~\ref{sec:Tgate})]
    Any family of circuits consisting of Clifford operations and $N_\text{T}$ T-gates requires $N_\text{T}=\omega(\log n)$ to prepare PRDMs with entropy $S_2=O(\log n)$.
\end{proposition}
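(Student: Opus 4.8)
The plan is to argue by contradiction, converting the efficient magic test of Theorem~\ref{thm:testing} into a polynomial-time distinguisher that would break pseudorandomness. Suppose, toward a contradiction, that some family of Clifford$+$T circuits using only $N_\text{T}=O(\log n)$ T-gates prepares a PRDM ensemble $\{\rho\}$ with entropy $S_2(\rho)=O(\log n)$. The first step is to bound the magic of such outputs. Gadgetizing each T-gate into the injection of a resource state $\ket{T}$, the state $\rho$ is obtained from $\ket{T}^{\otimes N_\text{T}}$ (tensored with stabilizer inputs) by stabilizer operations --- Clifford unitaries, Pauli measurements, classically controlled Cliffords, and partial traces --- none of which increase the robustness of magic. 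Since robustness is multiplicative under tensor products, $\mathrm{LR}(\rho)\le N_\text{T}\,\ln\mathcal{R}(\ket{T})=O(\log n)$, and the monotone ordering $D_\text{F}\le\mathrm{LR}$ then gives $D_\text{F}(\rho)=O(\log n)$. Hence the PRDM ensemble falls squarely into case $(a)$ of Theorem~\ref{thm:testing}.

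The second step is to show that the target object a PRDM must mimic --- a truly random density matrix of the same entropy $S_2=O(\log n)$, i.e.\ of rank $e^{S_2}=\mathrm{poly}(n)$ --- sits in case $(b)$, with $\mathrm{LR}=\omega(\log n)$ and $D_\text{F}=\omega(\log n)$. Here I would use the fact that Haar-random pure states are extensively magical, with a sharply concentrated Pauli spectrum whose non-identity moments $A_\alpha$ are exponentially small~\cite{turkeshi2023pauli}. Because the entropy bound forces the rank to be only polynomial, the resulting low-rank random mixture should retain an exponentially small $A_\alpha$ up to $\mathrm{poly}(n)$ corrections, which through the bound relating $-\ln A_\alpha$ to $D_\text{F}$ (\SM{}~\ref{sec:testing} and Ref.~\cite{iyer2024tolerant}) yields $D_\text{F}=\omega(\log n)$, and via $D_\text{F}\le\mathrm{LR}$ also $\mathrm{LR}=\omega(\log n)$.

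Combining the two steps, both the PRDM ensemble and the truly random ensemble have $S_2=O(\log n)$, but the former realises case $(a)$ and the latter case $(b)$ of Theorem~\ref{thm:testing}. Running the efficient tester of that theorem on $\mathrm{poly}(n)$ copies therefore distinguishes the two ensembles with high probability in polynomial time, contradicting the defining indistinguishability of a PRDM~\cite{bansal2024pseudorandomdensitymatrices}. We conclude that $N_\text{T}=\omega(\log n)$ is necessary, as claimed.

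I expect the main obstacle to be the second step: certifying that the \emph{ideal} random density matrices of entropy $S_2=O(\log n)$ genuinely have $D_\text{F}=\omega(\log n)$. For pure Haar states this is standard, but one must verify that taking a polynomial-rank random mixture (rather than a single pure state) does not collapse the magic below the $\omega(\log n)$ threshold --- equivalently, that the entropy budget $S_2=O(\log n)$ is small enough that the mixture still presents an exponentially suppressed Pauli spectrum. This is precisely the regime where Theorem~\ref{thm:testing} is tight, so carefully controlling these concentration estimates is where the real work lies.
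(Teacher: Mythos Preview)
Your overall strategy---contradiction via an efficient distinguisher---is exactly right, and your first step (bounding $\text{LR}$ and $D_\text{F}$ for the would-be PRDM via gadgetization and monotonicity) matches the paper. The gap is in the second step. You propose to show $D_\text{F}=\omega(\log n)$ for the GHSE by arguing that $A_\alpha$ is exponentially small and then invoking the bound of \SM{}~\ref{sec:testing}/Ref.~\cite{iyer2024tolerant}. But that bound runs the wrong way: it says $D_\text{F}(\rho)\le O(-\ln A_3(\rho))$, i.e.\ large $D_\text{F}$ forces small $A_3$, not the converse. From exponentially small $A_3$ you therefore cannot conclude $D_\text{F}=\omega(\log n)$; indeed the paper itself flags the absence of a lower bound on $D_\text{F}$ in terms of $\mathcal{W}_\alpha$ as an open problem. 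Hence you cannot place the GHSE in case~$(b)$ of Theorem~\ref{thm:testing} as stated, and your invocation of that theorem does not close.

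The paper sidesteps this by \emph{not} routing the random ensemble through Theorem~\ref{thm:testing}. It works directly with $-\ln A_3$ as the distinguishing statistic: the $A_3$ estimator is the actual algorithm, and one only needs $-\ln A_3=O(\log n)$ on the PRDM side (which you have, via $\mathcal{W}_3\le 2\,\text{LR}$ and $S_2=O(\log n)$) versus $-\ln A_3=\Theta(n)$ on the GHSE side. For the latter the paper gives a short, elementary argument that also resolves what you flagged as the main obstacle: expand $\rho$ in the Pauli basis, note that tracing out $m$ qubits retains only coefficients $\beta_{P'\otimes I_m}$, and conclude
\[
-\ln A_3\bigl(\text{tr}_m(\rho)\bigr)\ \ge\ -\ln A_3(\rho)-m\ln 2\,.
\]
Since Haar-random pure states have $-\ln A_3=\Theta(n)$ and $m=O(\log n)$, the GHSE inherits $-\ln A_3=\Theta(n)$. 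This one-line inequality replaces the concentration estimates you anticipated, and simultaneously eliminates any need to control $D_\text{F}$ on the random side.
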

This result follows from our efficient tester of Thm.~\ref{thm:testing} and is fully shown in \SM{}~\ref{sec:Tgate}: A (hypothetical) PRDM consisting of only $N_\text{T}=O(\log n)$ T-gates has low magic and thus can be efficiently distinguished from truly random mixed states, which are highly magical whenever $S_2=O(\log n)$.

A related task is to mimic high resource states by using only very little resources, which is called pseudoresources~\cite{bouland2022quantum,haug2023pseudorandom}. 
Here, in the context of magic, pseudomagic state ensembles have been proposed~\cite{gu2023little,haug2023pseudorandom}: They are two efficiently preparable ensembles which are indistinguishable for any efficient quantum algorithm, yet have widely different magic (see \SM{}~\ref{sec:pseudomagic}). In particular, ensemble ($a$) has high magic $f(n)$, while ($b$) has low magic $g(n)$~\cite{bouland2022quantum,gu2023little,haug2023pseudorandom}. The pseudomagic gap is the gap $f(n)$ vs $g(n)$ of such pseudomagic ensembles.
Such pseudomagic ensemble can mimic states with high magic by only using only very little magic resource themselves.

Recently, pseudomagic has also been related to encryption~\cite{haug2025pseudorandom}: Let us assume we want to securely transmit quantum states such that no eavesdropper can learn anything about the state or about the encryption process. To encrypt, one scrambles the state such that it becomes indistinguishable (for any efficient eavesdropper) from a random state. While scrambling can hide all information about the state itself, it turns out that meta-information about the encryption process can still be leaked. In particular, the pseudomagic gap determines how much an eavesdropper can learn about the magic of the state and the encryption process. A maximal pseudomagic gap ($f(n)=\Theta(n)$ vs $g(n)=0$) implies that no information about magic is leaked, while a sub-maximal gap implies leakage of information~\cite{haug2025pseudorandom}. In the latter case, an eavesdropper could potentially harvest magic from the encrypted state via black-box magic distillation~\cite{gu2023little,bansal2024pseudorandomdensitymatrices}.

Now, we ask what is the maximal possible pseudomagic gap $f(n)$ vs $g(n)$?
For pure states with $2$-R\'enyi entropy $S_2=0$, the pseudomagic is $f(n)=\Theta(n)$ vs $g(n)=\omega(\log n)$~\cite{gu2023little}. In contrast, highly mixed states with $S_2=\omega( \log n)$ can have the maximal possible gap of $f(n)=\Theta(n)$ vs $g(n)=0$~\cite{bansal2024pseudorandomdensitymatrices}, while the question for $S_2=O(\log n)$ has been an open problem.
Here, we find that states with $S_2=O(\log n)$ have in fact the same pseudomagic gap as completely pure states:
\begin{proposition}[Pseudomagic of mixed states (\SM{}~\ref{sec:pseudomagic})]
Pseudomagic state ensembles with $2$-R\'enyi entropy $S_2=O(\log n)$ can have a pseudomagic gap $f(n)=\Theta(n)$ vs $g(n)=\omega(\log n)$. 
\end{proposition}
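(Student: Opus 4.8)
The plan is to establish the claimed gap in two directions: an \emph{achievability} direction that exhibits explicit indistinguishable ensembles realizing $f(n)=\Theta(n)$ versus $g(n)=\omega(\log n)$ at entropy $S_2=O(\log n)$, and an \emph{optimality} direction showing the gap cannot be widened to $g(n)=O(\log n)$, which is what makes it coincide with the pure-state gap of Ref.~\cite{gu2023little}.

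For achievability, I would lift the pure-state pseudomagic ensembles of Ref.~\cite{gu2023little} into the bounded-entropy regime by tensoring with a maximally mixed ancilla. Let $\{\ket{\psi_a}\}$ and $\{\ket{\psi_b}\}$ be the two computationally indistinguishable pure ensembles with $\text{LR},D_\text{F}=\Theta(n)$ and $\text{LR},D_\text{F}=\omega(\log n)$ respectively. Fixing $k=O(\log n)$, define $\rho_{a}=\ket{\psi_a}\bra{\psi_a}\otimes I_k/2^k$ and $\rho_{b}=\ket{\psi_b}\bra{\psi_b}\otimes I_k/2^k$. Both carry matching entropy $S_2=k\ln 2=O(\log n)$ (so they cannot be separated by a purity/SWAP test) and remain efficiently preparable. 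Since $I_k/2^k$ is a mixed stabilizer state, tensoring leaves the magic unchanged: monotonicity of $\text{LR}$ and $D_\text{F}$ under discarding the ancilla gives $M(\rho_{a/b})\geq M(\ket{\psi_{a/b}})$, while the product approximant $\sigma^\star\otimes I_k/2^k\in\text{STAB}$, together with additivity of $\text{LR}$ and multiplicativity of $\mathcal{F}$, supplies the matching upper bound. Hence $\rho_a$ keeps $f(n)=\Theta(n)$ and $\rho_b$ keeps $g(n)=\omega(\log n)$.

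Indistinguishability is inherited from the pure ensembles: any efficient distinguisher $D'$ separating $\{\rho_a\}$ from $\{\rho_b\}$ on $t$ copies yields a distinguisher $D$ for $\{\ket{\psi_a}\}$ versus $\{\ket{\psi_b}\}$ that appends $t$ freshly prepared copies of $I_k/2^k$ and runs $D'$; because $(\ket{\psi}\bra{\psi}\otimes I_k/2^k)^{\otimes t}$ is assembled from $\ket{\psi}^{\otimes t}$ and efficiently preparable ancillas, this would contradict the pseudomagic property of Ref.~\cite{gu2023little}. Thus $\{\rho_a\},\{\rho_b\}$ constitute a valid pseudomagic pair with the stated gap.

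For optimality I would invoke the efficient tester of Thm.~\ref{thm:testing}. Suppose, for contradiction, a pseudomagic pair with $S_2=O(\log n)$ achieved $f(n)=\Theta(n)$ yet $g(n)=O(\log n)$. Then the high-magic ensemble lies in case~($b$) of Thm.~\ref{thm:testing} ($\text{LR},D_\text{F}=\omega(\log n)$) and the low-magic ensemble in case~($a$) ($\text{LR},D_\text{F}=O(\log n)$), so the testing promise is met and the tester separates them using $\text{poly}(n)$ copies with high probability, contradicting computational indistinguishability. Hence $g(n)=\omega(\log n)$ is forced, matching the pure-state bound. The main obstacle I anticipate is the achievability bookkeeping around the precise operational definitions: one must confirm that $\text{LR}$ and $D_\text{F}$ are genuinely invariant under tensoring with the maximally mixed stabilizer ancilla and monotone under the partial trace recovering the pure factor, and that both ensembles carry identical entropy so that no purity-based distinguisher exists; once these are checked, the tester-based optimality is essentially immediate.
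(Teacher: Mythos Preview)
Your optimality argument is essentially the paper's proof: assume a hypothetical pair with $g(n)=O(\log n)$, apply the efficient tester of Thm.~\ref{thm:testing} (which separates case~($a$) from case~($b$) when $S_2=O(\log n)$), and derive a contradiction with computational indistinguishability. That is exactly what the paper does, both in the main text and in \SM{}~\ref{sec:pseudomagic}.

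Where you differ is in scope. The paper treats the proposition as a one-sided statement---the appendix version reads ``a pseudomagic gap of \emph{at most} $f(n)=\Theta(n)$ vs $g(n)=\omega(\log n)$''---and accordingly proves only the impossibility of widening the gap to $g(n)=O(\log n)$. You additionally supply an explicit achievability construction by tensoring the pure pseudomagic ensembles of Ref.~\cite{gu2023little} with $I_k/2^k$ for $k=O(\log n)$. This is a sound argument (invariance of $\text{LR}$ and $D_\text{F}$ under tensoring with a mixed stabilizer ancilla follows from monotonicity under the partial trace in one direction and from the product stabilizer approximant in the other; indistinguishability transfers by the obvious reduction), and it is the natural way to justify that the stated gap is actually attained rather than merely an upper bound. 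The paper simply does not spell this direction out.
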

\begin{proof}
We prove via contradiction: Let us assume there exist a pseudomagic state ensemble with magic $f(n)=\Theta(n)$ vs $g(n)=\Theta(\log n)$. Here, we measure magic in terms of log-free robustness of magic $\text{LR}$ and stabilizer fidelity $D_\text{F}$. Then, due to Thm.~\ref{thm:testing}, there exist an efficient quantum algorithm to distinguish states with $O(\log n)$ magic and $\omega(\log n)$ magic whenever $S_2=O(\log n)$.  Thus, this pseudomagic state ensemble cannot exist and we must have a pseudomagic gap of $f(n)=\Theta(n)$ vs $g(n)=\omega(\log n)$.
\end{proof}
Thus, for limited entropy $S_2=O(\log n)$, one requires $\omega(\log n)$ magic to mimic high magic states. Only with extensive entropy $S_2=\omega(\log n)$ one can achieve the maximal pseudomagic gap. Thus, extensive entropy is needed as resource for a maximal pseudomagic gap and prevent leakage of information about magic to eavesdroppers.

\prlsection{Discussion} \label{sec:discussion}
We demonstrate that the magic of mixed states can be efficiently witnessed and tested. 
As our main tool, we introduce efficient witnesses of magic $\mathcal{W}_\alpha$ and filtered witness $\Tilde{\mathcal{W}}_\alpha$ with $\alpha\geq1/2$, where $\mathcal{W}_\alpha(\rho)>0$ or $\Tilde{\mathcal{W}}_\alpha(\rho)>0$ guarantees that $\rho$ has magic. Beyond that, they offer useful quantitative information about the degree of magic possessed by the state, since they provide rigorous bounds on $\text{LR}(\rho)$ and $D_\text{F}(\rho)$. Notably, our witnesses are quantitative witnesses~\cite{eisert2007quantitative}. This is a stronger notion of witnessing that allows for quantitative statements about the amount of magic.
Furthermore, our witnesses allow for the analysis of smaller subsystems, thus significantly reducing experimental demands. This parallels the significant role of the $p_3$-PPT condition to detect mixed state entanglement~\cite{elben2020mixed}. 

We provide tests to determine whether states have $O(\log n)$ or $\omega(\log n)$ magic, which are efficient as long as the state is not too mixed, i.e. its $2$-R\'enyi entropy is bounded by $S_2=O(\log n)$. This bound  is tight, as testing is inherently inefficient for $S_2=\omega(\log n)$~\cite{bansal2024pseudorandomdensitymatrices}.
Our work resolves the long-standing question of testing as function of $S_2$~\cite{montanaro2013survey} (see Tab.~\ref{tab:testing}), where previously testing algorithms for magic monotones were only known for completely pure states, i.e. $S_2=0$~\cite{gross2021schur,haug2022quantifying}.
We note that our work defines magic in terms of $\text{LR}$ and $D_\text{F}$, which usually have the same scaling in $n$ (e.g. for Clifford+T circuits), but for special types of states can behave differently~\cite{haug2024probing}.
Likely, testing efficiency solely depends on $D_\text{F}$, and future work could find a lower bound on $D_\text{F}$ in terms of $\mathcal{W}_\alpha$ in order to prove this. 

With our witnesses, we experimentally certify the magic of mixed states on the IonQ quantum computer. Previous magic measures were only well defined for pure quantum states~\cite{gross2021schur,haug2022scalable,tirrito2023quantifying,turkeshi2023measuring} and thus did not unambiguously certify whether magic has actually been prepared in experiment~\cite{haug2022scalable,oliviero2022measuring,niroula2023phase,haug2023efficient,bluvstein2024logical}. 
In fact, our witnesses provide to our knowledge the first efficient, robust and scalable way to experimentally witness magic. 
We believe our tools are crucial to enable quantum computing: We can robustly certify the number of noisy T-states, the key resource to implement T-gates which enable universality in fault-tolerant quantum computers. In particular, we can efficiently verify the number of T-states even when subject to a quite general class of noise models. This task is essential in large-scale quantum systems where the noise is not well characterized.

We find that magic is surprisingly robust to noise. We observe a high degree of magic in our experiment even under physical noise. 
Further, we show that magic of random quantum states such as generated by local random circuits is highly robust to noise, being able to tolerate exponentially high depolarization noise. We find that there is a critical circuit $d_\text{c}$, below which magic can be witnessed independent of qubit number.
Thus, NISQ devices can serve as scalable generators of magic~\cite{bharti2022noisy,arute2019quantum,chen2023complexity}.

Our magic witness also enables the study of magic in mixed many-body systems, a regime for which so far no efficient methods existed to our knowledge. In particular, we efficiently compute the magic of (entangled) subsystems of MPS for any integer $\alpha\geq1$ (e.g. $O(n\chi^3\epsilon^{-2})$ for $\alpha=1$). %
We demonstrate that even small (mixed) subsystems of the ground state of the TFIM contain substantial amount of magic, even close to the critical point where there is increased entropy due to entanglement. Magic is detected only beyond a particular subsystem size, which depends on the transverse field. %
We find that there is an intrinsic competition between the moment of the Pauli spectrum $(1-\alpha)^{-1}\ln A_\alpha$ that increases magic, and entropy $S_2$ that decreases magic. For area law states such as ground states of local gapped 1D Hamiltonians where $S_2=O(\log n)$, a volume-law Pauli spectrum $(1-\alpha)^{-1}\ln A_\alpha\sim \omega(\log n)$ thus directly implies the presence of magic.

Our results also have direct implications on quantum cryptography: We show that the pseudomagic gap, i.e. the capability of low-magic states to mimic high-magic states, is fundamentally constrained for limited entropy $S_2=O(\log n)$. In particular, we find a pseudomagic gap of $f(n)=\Theta(n)$ vs $g(n)=\omega(\log n)$, matching the one previously found for completely pure states~\cite{gu2023little}. The optimal gap is only achieved for high entropy $S_2=\omega(\log n)$, where one finds $f(n)=\Theta(n)$ vs $g(n)=0$~\cite{bansal2024pseudorandomdensitymatrices}. We resolve the question of the pseudomagic gap as a function of entropy, with the complete summary given in Tab.~\ref{tab:pseudoresource}.
While usually entropy in quantum states is associated negatively with noise and destroying quantumness, we find the opposite is true in quantum cryptography: Here, entropy is an important resource needed to hide information about quantum resources such as magic from eavesdroppers~\cite{haug2025pseudorandom}.
To completely hide magic, one requires $S_2=\omega(\log n)$, which we find is tight.

Finally, we note that, while the property testing of pure states has been extensively studied, relatively little work relates to the testing of mixed states~\cite{montanaro2013survey,bansal2024pseudorandomdensitymatrices}. 
Our work shows that magic is indeed testable as long as the entropy is bounded, while future work could study whether the same holds also for other properties such as entanglement or coherence. Further, it would be interesting to study testing via single-copy state access~\cite{hinsche2024single}, where the gaps between learning tasks on mixed and pure states are known~\cite{liu2024exponential}.
In addition, one could optimize our witnesses to other kind of moments of the Pauli spectrum, as has similarly been done for mixed state entanglement~\cite{yu2021optimal,neven2021symmetryresolved}. In particular, other witnesses could be constructed by considering boundaries of the stabilizer polytope~\cite{warmuz2024magic}, which may improve over our witness. Moreover, an interesting question arises: is it possible to construct proper mixed state monotones that are also efficiently computable? 
Finally, it would be interesting to extend our analysis to qudits, where for odd and prime local Hilbert space dimensions genuine and computable magic monotones are known~\cite{veitch2014resource,veitch2012negative,tarabunga2024critical}.

\begin{acknowledgments}
We thank Lorenzo Piroli, David Aram Korbany, Marcello Dalmonte, and Emanuele Tirrito for insightful discussions. P.S.T. acknowledges funding by the Deutsche Forschungsgemeinschaft (DFG, German Research Foundation) under Germany’s Excellence Strategy – EXC-2111 – 390814868.
Our MPS simulations have been performed using the iTensor library~\cite{itensor}.

\end{acknowledgments}

\bibliography{bibliography}

\let\addcontentsline\oldaddcontentsline

\appendix

\onecolumngrid
\newpage 

\setcounter{secnumdepth}{2}
\setcounter{equation}{0}
\setcounter{figure}{0}
\setcounter{section}{0}

\renewcommand{\thesection}{\Alph{section}}
\renewcommand{\thesubsection}{\arabic{subsection}}
\renewcommand*{\theHsection}{\thesection}

\clearpage
\begin{center}

\textbf{\large \SMLong{}}
\end{center}
\setcounter{equation}{0}
\setcounter{figure}{0}
\setcounter{table}{0}

\makeatletter

\renewcommand{\thefigure}{S\arabic{figure}}

We provide proofs and additional details supporting the claims in the main text.

\makeatletter
\@starttoc{toc}

\makeatother
\section{Definitions}
We shortly review the notation for the scaling of functions:
\begin{itemize}
    \item Big-O notation $O(.)$: For a function $f(n)$, if there exists a constant $c$ and a specific input size $n_0$ such that $f(n) \leq c \cdot g(n)$ for all $n \geq n_0$, where $g(n)$ is a well-defined function, then we express it as $f(n) = O(g(n))$. This signifies the upper limit of how fast a function grows in respect to $g(n)$.
    \item Big-Omega notation $\Omega(.)$: For a function $f(n)$, if there exists a constant $c$ and a specific input size $n_0$ such that $f(n) \geq c \cdot g(n)$ for all $n \geq n_0$, where $g(n)$ is a well-defined function, then we express it as $f(n) = \Omega(g(n))$. This signifies the lower limit of how fast a function grows in respect to $g(n)$.
    \item Big-Theta notation $\Theta(.)$: For a function $f(n)$, if $f(n) = O(g(n))$ and if $f(n) = \Omega(g(n))$, where $g(n)$ is a well-defined function, then we express it as $f(n) = \Theta(g(n))$. This implies that the function grows with the same scaling as $g(n)$.
    \item Little-Omega notation $\omega(.)$: For a function $f(n)$, when for any constant $c>0$ there exist a specific input size $n_0$ such that $f(n) \geq c \cdot g(n)$ for all $n \geq n_0$, where $g(n)$ is a well-defined function, then we express it as $f(n) = \omega(g(n))$. This implies that the function grows strictly faster than $g(n)$.
    \item Little-o notation $o(.)$: For a function $f(n)$, when for any constant $c>0$ there exist a specific input size $n_0$ such that $f(n) \leq c \cdot g(n)$ for all $n \geq n_0$, where $g(n)$ is a well-defined function, then we express it as $f(n) = o(g(n))$. This signifies that the function grows strictly slower than $g(n)$.
    \item Negligible functions $\text{negl}(.)$:  Positive real-valued functions $\mu:\mathbb \mathbb{N} \to \mathbb R$ are negligible if and only if $\forall c \in \mathbb{N}$, $\exists n_0 \in \mathbb{N}$ such that $\forall n > n_0$, $\mu(n) < n^{-c}$. This means that the function decays faster than any inverse polynomial. Alternatively, one can write $o(1/\text{poly}(n))$ or $2^{-\omega(\log n)}$.
    
\end{itemize}

\section{Note on violation of monotonicity}\label{sec:monoton}
While $M_\alpha$ (and thus $\mathcal{W}_\alpha)$ for $\alpha\ge2$ is a magic monotone for pure states under Clifford operations that map pure states to pure states~\cite{leone2024stabilizer}, we note that $\mathcal{W}_\alpha$ for any $\alpha$ is not monotone under more general Clifford channels~\cite{haug2023stabilizer}.

Here, we note that a magic monotone such as log-free robustness of magic $\text{LR}(\rho)$ or the stabilizer fidelity $D_\text{F}(\rho)$ fulfil two conditions~\cite{veitch2014resource}: First, they must be zero if and only if the state is a mixture of stabilizer states $\rho_\text{C}$, i.e. $\text{LR}(\rho_\text{C})=0$. Second, they must be non-increasing under any Clifford channel $\Gamma_\text{C}$, i.e. $\text{LR}(\Gamma_\text{C}(\rho))\leq\text{LR}(\rho)$. Here, Clifford channels are any channel that can be implemented by Clifford unitaries, classical communication, classical randomness and measurements in the $z$-basis.

As an example, consider $\ket{\psi}=\cos(\pi/16)\ket{0}+\sin(\pi/16)\ket{1}$ and the mixed unital Clifford channel $\Gamma_\text{C}(\rho)=\frac{1}{2}\rho+\frac{1}{2}H_\text{d} \rho H_\text{d}$, where $H_\text{d}$ is the Hadamard gate. Here, we find that $\mathcal{W}_\alpha(\Gamma_\text{C}(\ket{\psi}))>\mathcal{W}_\alpha(\ket{\psi})$ for $1/2<\alpha<2.5$, i.e. $\mathcal{W}_\alpha$ violates monotonicity as it increases under $\Gamma_\text{C}$. Similar examples can be found for all $\alpha$. 

\section{Filtered magic witness} \label{sec:alt_witness}
We define the filtered magic witness as $\Tilde{\mathcal{W}}_\alpha$:
\begin{equation}\label{eq:filtered_witness}
    \Tilde{\mathcal{W}}_\alpha(\rho)= \frac{1}{1-\alpha}\ln \Tilde{A}_\alpha(\rho) + \frac{1-2\alpha}{1-\alpha}\ln \Tilde{A}_1(\rho)\,,
\end{equation}
where
\begin{equation}
\Tilde{A}_\alpha(\rho)=\frac{\sum_{P\in\Tilde{\mathcal{P}}_n} \vert\text{tr}(\rho P)\vert^{2\alpha}} {2^n - 1} \equiv \frac{2^nA_\alpha-1}{2^n-1} \,,
\end{equation}
where $\Tilde{\mathcal{P}}_n = \mathcal{P}_n/ \{ I\}$ and $A_1=\text{tr}(\rho^2)$. Note that $\Tilde{\mathcal{W}}_\alpha$ is not well-defined for the maximally mixed state $\rho=I/2^n$. The von Neumann limit $\alpha \to 1$ is given by
\begin{equation}
\Tilde{\mathcal{W}}_1(\rho)=-\sum_{P\in\Tilde{\mathcal{P}}_n} 
\frac{\text{tr}(\rho P)^2}{2^n\text{tr}(\rho^2)-1} \ln(\text{tr}(\rho P)^2)-2\ln \Tilde{A}_1(\rho)
\end{equation}
For pure states $\rho = \ket{\psi} \bra{\psi}$, $\Tilde{\mathcal{W}}_\alpha$ becomes the filtered SRE~\cite{turkeshi2023pauli}
\begin{equation} \label{eq:filtered_sre}
 \Tilde{\mathcal{W}}_\alpha(\ket{\psi})\equiv \Tilde{M}_\alpha(\ket{\psi})  =\frac{1}{1-\alpha}\ln(\frac{\sum_{P\in\Tilde{\mathcal{P}}_n} \bra{\psi}P\ket{\psi}^{2\alpha}} {2^n - 1})\,.
\end{equation}
Similarly to $\mathcal{W}_\alpha$, $\Tilde{\mathcal{W}}_\alpha$ is a witness of magic for mixed states for any $\alpha\geq1/2$: if $\Tilde{\mathcal{W}}_\alpha > 0$, $\rho$ is a nonstabilizer state. The proof goes analogously to $\mathcal{W}_\alpha$, by noting that 
\begin{equation}
    \Tilde{\mathcal{W}}_{1/2}(\rho) = 2 \ln \Tilde{\mathcal{D}}(\rho)\,, 
\end{equation}
where we define $\Tilde{\mathcal{D}}(\rho)$ as
\begin{equation}
\Tilde{\mathcal{D}}(\rho)=
\frac{\sum_{P\in\Tilde{\mathcal{P}}_n} \lvert \text{tr}(\rho P) \rvert} {2^n - 1} \equiv \frac{2^n \mathcal{D}(\rho)-1}{2^n-1} \,.
\end{equation}
Like the stabilizer norm, $\Tilde{\mathcal{D}}(\rho)$ is a witness of mixed-state magic as $\ln \Tilde{\mathcal{D}}(\rho) >0 $ guarantees that $\rho$ is a nonstabilizer state~\cite{howard2017robustness}. Note that the set of nonstabilizer states that can be detected by $\Tilde{\mathcal{D}}(\rho)$ and $\mathcal{D}(\rho)$ is the same since $\Tilde{\mathcal{D}}(\rho)>1$ if and only if $\mathcal{D}(\rho)>1$. Nevertheless, $\Tilde{\mathcal{D}}(\rho)$ provides a tighter lower bound to $\text{LR}(\rho)$ since $\Tilde{\mathcal{D}}(\rho) > \mathcal{D}(\rho)$ whenever $\Tilde{\mathcal{D}}(\rho)>1$.

We can then show by the hierarchy of R\'enyi entropies over the probability distribution $\Tilde{p}_\rho(P)=\text{tr}(\rho P)^2/(2^n\text{tr}(\rho^2)-1)$ for $P \neq I$ that
\begin{equation}
    2\ln \Tilde{\mathcal{D}}(\rho) \geq  \Tilde{\mathcal{W}}_\alpha(\rho) \quad (\alpha \geq 1/2)\,.
\end{equation}
It follows that the alternative magic witness also provides a lower bound on $\text{LR}(\rho)$ by
\begin{equation}
    2\text{LR}(\rho) \geq 2 \ln\Tilde{\mathcal{D}}(\rho)\geq \Tilde{\mathcal{W}}_\alpha(\rho) \quad (\alpha \geq 1/2)\,.
\end{equation}
Note that $\mathcal{W}_\alpha$ and $\Tilde{\mathcal{W}}_\alpha$ have the same asymptotic behavior for large $n$.

To illustrate the distinct behavior of $\mathcal{W}_\alpha$ and $\Tilde{\mathcal{W}}_\alpha$, we consider the single-qubit $T$ state $\ket{T} = (\ket{0} + e^{i \pi/4} \ket{1})/\sqrt{2}$ subjected to depolarizing noise $\rho_\mathrm{dp}
 = (1-p)\ket{T}\bra{T} + p I/2^n$ 
 with a probability $p$. This effectively rescales the expectation values of all Pauli strings $P \in \mathcal{P}/ \{ I\}$ by $(1-p)$. Note that, for single-qubit states, $\mathcal{D}(\rho)>1$ (and $\Tilde{\mathcal{D}}(\rho)>1$) is a necessary and sufficient condition of magic~\cite{rall2019simulation}. By direct calculation, we obtain
 \begin{equation}
     \mathcal{W}_\alpha = \frac{1}{1-\alpha}\ln (\frac{1+(1-p)^{2\alpha} 2^{1-\alpha}}{2}) + \frac{1-2\alpha}{1-\alpha}\ln(\frac{p^2-2p+2}{2})
 \end{equation}
 and
 \begin{equation}
     \Tilde{\mathcal{W}}_\alpha = \ln(2(1-p)^2).
 \end{equation}
 In particular, we find that the transition to a stabilizer state occurs at $p_c=1-1/\sqrt{2}$, where both $\mathcal{W}_{1/2}$ and $\Tilde{\mathcal{W}}_{1/2}$ change sign.
 Moreover, notice that $\Tilde{\mathcal{W}}_\alpha$ is independent of the R\'enyi index $\alpha$. This is a consequence of the fact that the distribution $\Tilde{p}_\rho(P)$ is flat for the $T$ state. This also implies that $\Tilde{\mathcal{W}}_\alpha$ obtains the same transition as $\Tilde{\mathcal{W}}_{1/2}$, thus successfully witnessing the magic at $p>p_c$. In contrast, $\mathcal{W}_\alpha$ for $\alpha>1/2$ has different transitions from $\mathcal{W}_{1/2}$. For example, we find that $\mathcal{W}_2$ changes sign at $p_{c,2}= 0.1565... < p_c$. Thus, $\mathcal{W}_2$ fails to detect the magic at $p_c > p \geq p_{c,2}$.

\section{Relationship of stabilizer fidelity and log-free robustness of magic}\label{sec:boundLR}
In this section, we show that the stabilizer fidelity is lower bounded by the log-free robustness of magic:
\begin{lemma}
The stabilizer fidelity of mixed states $D_\text{F}$ lower bounds the log-free robustness of magic $\text{LR}$
\begin{equation}
    D_\text{F}(\rho)\leq \text{LR}(\rho)\,.
\end{equation}
\end{lemma}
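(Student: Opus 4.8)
The plan is to exploit that $D_\text{F}(\rho)=\min_{\sigma\in\text{STAB}}-\ln\mathcal{F}(\rho,\sigma)$ is a minimization over stabilizer states, so it suffices to produce a single mixed stabilizer state $\sigma$ with $\mathcal{F}(\rho,\sigma)\geq e^{-\text{LR}(\rho)}$. I would start from an optimal decomposition for the log-free robustness, $\rho=\sum_i x_i\,\psi_{\text{C}}^{(i)}$ with $\sum_i|x_i|=\mathcal{R}\equiv e^{\text{LR}(\rho)}$ and $\psi_{\text{C}}^{(i)}=\ket{\psi_{\text{C}}^{(i)}}\bra{\psi_{\text{C}}^{(i)}}$. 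Separating positive and negative coefficients gives $\rho=P\sigma_+-N\sigma_-$, where $\sigma_\pm$ are the normalized positive/negative parts (each a genuine convex mixture of pure stabilizer states, hence in $\text{STAB}$), $P=\sum_{x_i>0}x_i$ and $N=\sum_{x_i<0}|x_i|$. Taking the trace forces $P-N=1$, while by construction $P+N=\mathcal{R}$, so $P=(\mathcal{R}+1)/2$. Since $N\sigma_-\geq0$, this yields the operator inequality $\rho\leq P\sigma_+$.

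The technical core is then a fidelity lower bound: if $\rho\leq\lambda\sigma$ for a state $\sigma$, then $\mathcal{F}(\rho,\sigma)\geq1/\lambda$. I would prove this by passing to $\tilde\rho\equiv\sigma^{-1/2}\rho\sigma^{-1/2}$ (using the pseudo-inverse on $\text{supp}(\sigma)$, which contains $\text{supp}(\rho)$ precisely because $\rho\leq\lambda\sigma$), so that $\rho\leq\lambda\sigma$ becomes $\tilde\rho\leq\lambda I$. Writing the fidelity as $\sqrt{\mathcal{F}(\rho,\sigma)}=\text{tr}\sqrt{\sigma\tilde\rho\sigma}=\lVert\tilde\rho^{1/2}\sigma\rVert_1\geq\text{tr}(\tilde\rho^{1/2}\sigma)$, and using the scalar inequality $\sqrt t\geq t/\sqrt\lambda$ on the spectrum $[0,\lambda]$ of $\tilde\rho$, which lifts to the operator inequality $\tilde\rho^{1/2}\geq\lambda^{-1/2}\tilde\rho$, I obtain $\text{tr}(\tilde\rho^{1/2}\sigma)\geq\lambda^{-1/2}\,\text{tr}(\tilde\rho\sigma)=\lambda^{-1/2}\,\text{tr}(\rho)=\lambda^{-1/2}$. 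Hence $\mathcal{F}(\rho,\sigma)\geq1/\lambda$.

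Applying this with $\sigma=\sigma_+$ and $\lambda=P$ gives $\mathcal{F}(\rho,\sigma_+)\geq1/P$, so $D_\text{F}(\rho)\leq\ln P=\ln\tfrac{\mathcal{R}+1}{2}$. Since $\mathcal{R}=\sum_i|x_i|\geq|\sum_i x_i|=\text{tr}(\rho)=1$, we have $\tfrac{\mathcal{R}+1}{2}\leq\mathcal{R}$, and therefore $D_\text{F}(\rho)\leq\ln\mathcal{R}=\text{LR}(\rho)$, as claimed. The main obstacle I anticipate is establishing the fidelity lower bound in the correct direction: naive manipulations of $\sqrt{\sigma\tilde\rho\sigma}$ tend to produce \emph{upper} bounds on $\mathcal{F}$, and the decisive move is to bound the trace norm below by the bare trace and then deploy the operator-monotonicity inequality $\tilde\rho^{1/2}\geq\lambda^{-1/2}\tilde\rho$. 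A secondary technical point is the non-invertibility of $\sigma_+$, which I would handle either by restricting to $\text{supp}(\sigma_+)$ or by regularizing with the (stabilizer) maximally mixed state and taking a limit.
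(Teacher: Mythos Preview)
Your proof is correct and takes a genuinely different, more self-contained route than the paper. The paper argues via the $\alpha$--$z$ R\'enyi relative entropies: it identifies $D_\text{F}=D_{1/2,1/2}$ and the generalized robustness $\Lambda^+=\lim_{\alpha\to\infty}D_{\alpha,\alpha-1}$, then invokes monotonicity results from the literature to chain $D_\text{F}\leq\Lambda^+\leq\text{LR}$. You instead construct an explicit stabilizer witness $\sigma_+$ from the optimal robustness decomposition and prove the operator-inequality lemma $\rho\leq\lambda\sigma\Rightarrow\mathcal{F}(\rho,\sigma)\geq1/\lambda$ directly. Your lemma is exactly the $\alpha=1/2$ versus $\alpha=\infty$ case of the sandwiched R\'enyi monotonicity that the paper cites, and your decomposition step $\rho\leq P\sigma_+$ is precisely the content of the cited Seddon--Campbell bound $\Lambda^+\leq\text{LR}$; so the two proofs have the same skeleton, but yours rederives the needed special cases from scratch rather than importing them. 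What you gain is an elementary, reference-free argument and the marginally sharper intermediate bound $D_\text{F}(\rho)\leq\ln\tfrac{e^{\text{LR}(\rho)}+1}{2}$; what the paper's route buys is brevity and a transparent connection to the generalized robustness $\Lambda^+$ as an intermediate monotone.
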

\begin{proof}
First, we note that $\alpha-z$ R\'enyi entropies can be written as~\cite{rubboli2024mixed}
\begin{equation}
    D_{\alpha,z}(\rho)=\min_{\sigma\in\text{STAB}}\frac{1}{\alpha-1}\ln(\text{tr}[(\rho^{\frac{\alpha}{2z}} \sigma^{\frac{1-\alpha}{z}} \rho^{\frac{\alpha}{2z}})^{z}])\,.
\end{equation}
In particular, we have the stabilizer fidelity
\begin{equation}
    D_\text{F}(\rho)=D_{1/2,1/2}(\rho)
\end{equation}
and the generalized robustness of magic
\begin{equation}
     \Lambda^+(\rho)=\lim_{\alpha\rightarrow \infty} D_{\alpha,\alpha-1}(\rho)\,.
\end{equation}
First, from Ref.~\cite{lin2015investigating} it follows that $D_\text{F}(\rho)\leq \Lambda^+(\rho)$. This is due to the fact that $D_{\alpha,\alpha}$ increases monotonically with $\alpha$ as well as decreases with increasing $z$ when $\alpha>1$.
Then, Ref.~\cite{seddon2021quantifying}  (Lemma 11) showed that $\Lambda^+(\rho)\leq \text{LR}(\rho)$.
\end{proof}

\section{Efficient measurement of witness}\label{sec:measwitness}
Here, we show how to efficiently measure our witness $\mathcal{W}_\alpha$ for odd integer $\alpha>1$. Note that the scheme to efficiently measure $A_\alpha$ was originally provided in Ref.~\cite{haug2023efficient}.

First, we note that $A_\alpha$ can be re-written into the expectation value of a multi-copy observable via the replica trick~\cite{haug2022quantifying}
\begin{align}\label{eq:contraction2}
	 A_\alpha(\rho)=2^{-n}\sum_{P \in \mathcal{P}_n} \text{tr}(\rho P)^{2\alpha}=\text{tr}(\rho^{\otimes 2\alpha}\zeta_{\alpha}^{\otimes n})\,,
\end{align}
where $\zeta_{\alpha}=\frac{1}{2}\sum_{k=0}^3 (\sigma^k)^{\otimes 2\alpha}$.
One can show that for integer odd $\alpha$, $\zeta_{\alpha}$ has eigenvalues $\{+1,-1\}$, while for even $\alpha$ it is a projector with eigenvalues $\{0,2\}$. 
One can identify $A_1(\rho)=\text{tr}(\rho^2)$ with the purity.

To measure $A_\alpha$, let us first recall the Bell transformation as shown in Fig.~\ref{fig:Bell}
acting on two qubits
\begin{equation}
U_\text{Bell}=(H\otimes I_1) \text{CNOT}\,,
\end{equation}
where $H=\frac{1}{\sqrt{2}}(\sigma^x+\sigma^z)$ is the Hadamard gate, and the CNOT gate $\text{CNOT}=\exp(i\frac{\pi}{4}(I_1-\sigma^z)\otimes(I_1-\sigma^x))$. 

\begin{figure}[htbp]
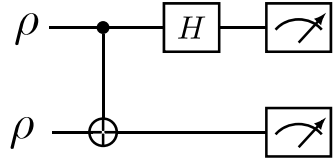

	\centering	
	\subfigimg[width=0.24\textwidth]{}{BellMeasurement.pdf}
	\caption{Circuit for Bell transformation on two copies of $n$-qubit state $\rho$, where for each qubit pair one applies a CNOT and Hadamard gate (i.e. $n$ CNOT and Hadamard gates in total). This is followed by measurements in the computational basis.
	}
	\label{fig:Bell}
\end{figure}
It turns out $\zeta_{\alpha}$ is diagonalized  by $U_\text{Bell}$
\begin{align*}
&A_\alpha=\text{tr}(\rho^{\otimes 2\alpha}({U_\text{Bell}^{\otimes \alpha}}^\dagger \frac{1}{2}( (I_1\otimes I_1)^{\otimes \alpha}+(\sigma^z\otimes I_1)^{\otimes \alpha}\numberthis\label{eq:meas}\\&
+(I_1\otimes \sigma^z)^{\otimes \alpha}+(-1)^{\alpha}(\sigma^z\otimes \sigma^z)^{\otimes \alpha})U_\text{Bell}^{\otimes \alpha})^{\otimes n})\,.
\end{align*}
Notably, while $A_\alpha$ is written in terms of $2\alpha$ copies of $\rho$, it suffices to actually only prepare $2$ copies of $\rho$ at the same time.
Using this result, Ref.~\cite{haug2023efficient} constructed Algorithm~\ref{alg:SEmeas}, which is efficient for odd integer $\alpha>1$.
Basically, one prepares two copies of $\rho$, applies Bell transformation, measures in computational basis, repeats this $O(\alpha)$ times, and  performs some classical post-processing in $O(\alpha n)$ time.
Note that the purity, which is needed for $\mathcal{W}_\alpha$, is given by $A_1(\rho)=\text{tr}(\rho^2)$, and can be computed with the same algorithm and complexity.

The total efficiency is as follows:
\begin{theorem}[Efficient witness of magic]\label{thm:witness_sup}
For a given (mixed) $n$-qubit state $\rho$ and odd $\alpha$, there exist an efficient algorithm to measure $A_\alpha(\rho)$ to additive precision $\epsilon$ with failure probability $\delta$ using $O(\alpha\epsilon^{-2}\log(2/\delta))$ copies of $\rho$, $O(1)$ circuit depth, and $O(\alpha n\log(2/\delta))$ classical post-processing time. 
\end{theorem}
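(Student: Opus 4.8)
The plan is to convert the multi-copy representation of $A_\alpha$ into a bounded, unbiased classical estimator that can be sampled using only two copies of $\rho$ at any instant, and then control the number of repetitions with a concentration inequality. I would start from the replica identity $A_\alpha(\rho)=\text{tr}(\rho^{\otimes 2\alpha}\zeta_\alpha^{\otimes n})$ with $\zeta_\alpha=\frac{1}{2}\sum_{k=0}^3(\sigma^k)^{\otimes 2\alpha}$, and use the diagonalization in \eqref{eq:meas}: conjugating $\zeta_\alpha$ by $U_\text{Bell}^{\otimes\alpha}$, i.e. one Bell transformation per pair of the $2\alpha$ replicas at each site, maps it to an operator that is diagonal in the computational basis. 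Hence measuring the transformed replicas in that basis directly reveals the eigenvalue of $\zeta_\alpha^{\otimes n}$ on the observed outcome string.

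Next I would write the estimator explicitly. Grouping the $2\alpha$ replicas into $\alpha$ pairs, a Bell measurement of pair $m$ at site $j$ returns bits $(a_{m,j},b_{m,j})$, and the per-site eigenvalue of the diagonal operator of \eqref{eq:meas} is $d_j=\frac{1}{2}[1+\prod_m(-1)^{a_{m,j}}+\prod_m(-1)^{b_{m,j}}+(-1)^\alpha\prod_m(-1)^{a_{m,j}+b_{m,j}}]$, so the single-shot estimator is $X=\prod_{j=1}^n d_j$, whose expectation over the measurement distribution equals $A_\alpha(\rho)$ by construction. Two observations make this efficient: for odd $\alpha$ the spectrum of $\zeta_\alpha$ is $\{+1,-1\}$, so $X\in\{-1,+1\}$ is bounded; and the $\alpha$ pairs enter $X$ only through mutually independent single-pair Bell measurements, since the state $\rho^{\otimes 2\alpha}$ factorizes across pairs and each $U_\text{Bell}$ acts within a pair. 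Therefore one never holds more than two copies of $\rho$ at once: one performs $\alpha$ successive Bell measurements on freshly prepared pairs and combines their $n$ bit-pair outcomes into a single sample of $X$.

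Finally I would bound the resources. Because $X\in[-1,1]$ is an unbiased estimator of $A_\alpha$, averaging $N=O(\epsilon^{-2}\log(2/\delta))$ independent samples and applying Hoeffding's inequality yields additive precision $\epsilon$ with failure probability at most $\delta$. Each sample consumes $\alpha$ pairs, i.e. $2\alpha$ copies, giving $O(\alpha\epsilon^{-2}\log(2/\delta))$ copies overall; the Bell transformation is one CNOT followed by one Hadamard per qubit pair, hence $O(1)$ circuit depth; and evaluating $X$ requires only the $O(\alpha n)$ parity products per sample, matching the claimed post-processing time. The main conceptual step, rather than any hard calculation, is establishing boundedness of the estimator together with the reduction from a nominal $2\alpha$-copy observable to $\alpha$ independent two-copy Bell measurements; once that is in place the sample-complexity bound follows immediately from Hoeffding's inequality.
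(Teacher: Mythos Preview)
Your proposal is correct and follows essentially the same route as the paper: both use the replica identity and the Bell-basis diagonalization of \eqref{eq:meas} to reduce the $2\alpha$-copy observable to $\alpha$ independent two-copy Bell measurements, observe that for odd $\alpha$ the resulting single-shot estimator is $\pm1$-valued, and then apply Hoeffding's inequality to bound the number of repetitions. Your explicit per-site formula $d_j$ reproduces exactly the parity rule in the paper's Algorithm~\ref{alg:SEmeas}, and your resource accounting matches the paper's.
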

\begin{proof}
This has been previously shown using Hoeffding's inequality~\cite{haug2023efficient}, which we reproduce for completeness:
Given estimator $\hat{A}_\alpha$ from Algorithm~\ref{alg:SEmeas}, mean value $A_\alpha$, additive error $\epsilon$, failure probability $\delta$, range of outcomes $\Delta\lambda$ and number $L$ of repetitions of the algorithm, Hoeffding's inequality is given by
\begin{equation}\label{eq:Hoeffding2}
    \delta\equiv\text{Pr}(\vert \hat{A}_\alpha - A_\alpha\vert\ge \epsilon)\le 2\exp\left(-\frac{2\epsilon^2 L}{\Delta\lambda^2}\right)\,.
\end{equation}
Now, we have $\Delta\lambda=2$ as the possible outcomes are $\pm1$. Then, by inverting, we get
\begin{equation}
    L \leq 2\epsilon^{-2}\log(2/\delta)\,.
\end{equation}
As each run of the algorithm requires $2\alpha$ copies of $\rho$, we need in total $O(\alpha \epsilon^{-2}\log(2/\delta))$ copies. The post-processing is linear in $n$, thus we have $O(\alpha n\epsilon^{-2}\log(2/\delta))$ post-processing time. Finally, Bell transformations require only constant circuit depth, with $n$ CNOT gates and Hadamard gates in parallel.

\end{proof}

\begin{algorithm}[h]
 \SetAlgoLined
 \LinesNumbered
  \SetKwInOut{Input}{Input}
  \SetKwInOut{Output}{Output}
   \Input{ $n$-qubit state $\rho$; integer $\alpha$; $L$ repetitions;%
   }
    \Output{$A_\alpha(\rho)$
    }

$A_\alpha= 0$

 \SetKwRepeat{Do}{do}{while}
    \For{$k=1,\dots,L$}{

    \For{$j=1,\dots,\alpha$}{
        Prepare $\eta=U_\text{Bell}^{\otimes n}\rho^{\otimes 2} {U_\text{Bell}^{\otimes n}}^\dagger$
    
        Sample in computational basis $\boldsymbol{r}^{(j)}\sim  \bra{r}\eta\ket{r}$
    }
    $b= 1$

        \For{$\ell=1,\dots,n$}{
            
            $\nu_1=\bigoplus_{j=1}^\alpha r^{(j)}_{2\ell-1}$;             $\nu_2=\bigoplus_{j=1}^\alpha r^{(j)}_{2\ell}$
            
             \uIf{$\alpha$ $\mathrm{is\,\,odd}$}{
               $b= b\cdot (-2\nu_1 \cdot\nu_2+1)$
              }
              \Else{
                $b= b\cdot 2(\nu_1-1) \cdot(\nu_2-1)$
              }
        }
        
    $A_\alpha= A_\alpha+b/L$
    }
 \caption{Measure $A_\alpha$ using Bell measurements.}
 \label{alg:SEmeas}
\end{algorithm}

\section{Efficient testing of magic for mixed states}\label{sec:testing}
Here, we give the detailed on proof on testing magic for mixed states with entropy $S_2=O(\log n)$. 

\begin{theorem}[Efficient testing of magic]\label{thm:testing_sup}
Let $\rho$ be an $n$-qubit state with $S_2(\rho)=O(\log n)$ where it is promised that
\begin{align*}
\mathrm{either}\quad (a)& \,\,\mathrm{LR}(\rho)=O(\log n) \,\,\mathrm{and}\,\,D_\mathrm{F}(\rho)=O(\log n) \,,\\
\mathrm{or}\quad (b)& \,\, \mathrm{LR}(\rho)=\omega(\log n)\,\,\mathrm{and}\,\, D_\mathrm{F}(\rho)=\omega(\log n)\,.
\end{align*} 
Then, there exist an efficient quantum algorithm to distinguish case ($a$) and ($b$) using $\mathrm{poly}(n)$ copies of $\rho$ with high probability. 
\end{theorem}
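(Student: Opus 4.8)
The plan is to reduce the promise problem to estimating the single efficiently measurable quantity $A_3(\rho)$, and to show that the two cases force $A_3$ into well-separated regimes. The index $\alpha=3$ is dictated by three simultaneous requirements: $\alpha\ge 1/2$ so that the robustness bound \eqref{eq:robustnessbound} applies, $\alpha\ge 2$ so that the stabilizer-fidelity estimate of Ref.~\cite{iyer2024tolerant} applies, and $\alpha$ odd so that $A_\alpha$ is efficiently measurable by the Bell-measurement scheme of Theorem~\ref{thm:witness_sup}; the smallest integer meeting all three is $3$. Throughout I use $\mathcal{W}_3(\rho)=M_3(\rho)-2S_2(\rho)=-\tfrac{1}{2}\ln A_3(\rho)-\tfrac{5}{2}S_2(\rho)$.

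For case ($a$) I would combine the bound $\mathcal{W}_3(\rho)\le 2\,\mathrm{LR}(\rho)$ of \eqref{eq:robustnessbound} with the promise $\mathrm{LR}(\rho)=O(\log n)$. Rearranging gives $-\tfrac{1}{2}\ln A_3(\rho)=\mathcal{W}_3(\rho)+\tfrac{5}{2}S_2(\rho)\le 2\,\mathrm{LR}(\rho)+\tfrac{5}{2}S_2(\rho)$, and since $S_2(\rho)=O(\log n)$ the right-hand side is $O(\log n)$. Hence $-\ln A_3(\rho)=O(\log n)$, i.e.\ $A_3(\rho)\ge n^{-C_a}$ for some constant $C_a$ fixed by the constants appearing in the promise.

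Case ($b$) is the genuine obstacle. Here I would invoke the relation between stabilizer fidelity and the Pauli spectrum valid for $\alpha\ge 2$: a lower bound $D_\mathrm{F}(\rho)=\omega(\log n)$ forces $-\ln A_3(\rho)=\omega(\log n)$, equivalently $A_3(\rho)=n^{-\omega(1)}$, decaying faster than any inverse polynomial. Establishing this implication --- deducing the smallness of the high moment $A_3$ from the largeness of $D_\mathrm{F}$ --- is the hard step, and it is exactly where the constraint $\alpha\ge 2$ is needed and where the quantitative fidelity estimates of Ref.~\cite{iyer2024tolerant} must be imported. Combined with $S_2(\rho)=O(\log n)$ this also yields $\mathcal{W}_3(\rho)=\omega(\log n)$, so the witness itself diverges; but for the distinguisher only the bound on $A_3$ is required.

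The two cases are thus separated by the gap ``$A_3(\rho)\ge n^{-C_a}$'' versus ``$A_3(\rho)=n^{-\omega(1)}$''. This gap is what makes an inverse-polynomial additive estimate of $A_3$ sufficient: one never needs to estimate $\ln A_3$ accurately in the regime where $A_3$ is tiny, only to detect that $A_3$ has fallen below the polynomial threshold. I would therefore run Algorithm~\ref{alg:SEmeas} to estimate $A_3(\rho)$ to additive precision $\epsilon=\tfrac{1}{4}n^{-C_a}$ with failure probability $\delta$; by Theorem~\ref{thm:witness_sup} and Hoeffding's inequality this costs $O(\epsilon^{-2}\log(1/\delta))=\mathrm{poly}(n)$ copies. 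Declaring case ($a$) when the estimate exceeds $\tfrac{5}{2}\epsilon$ and case ($b$) otherwise then succeeds with high probability, since in case ($a$) the estimate is at least $3\epsilon$ while in case ($b$) it is at most $2\epsilon$ for all large $n$.
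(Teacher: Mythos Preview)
Your proposal is correct and follows essentially the same route as the paper: bound $-\ln A_3$ from above via $\mathcal{W}_3\le 2\,\mathrm{LR}$ in case~($a$), from below via the stabilizer-fidelity estimate of Ref.~\cite{iyer2024tolerant} in case~($b$), and then separate the two regimes by an additive Hoeffding estimate of $A_3$ at inverse-polynomial precision. The paper differs only in making the invocation of Ref.~\cite{iyer2024tolerant} slightly more explicit (relaxing $D_\mathrm{F}$ to the pure-stabilizer overlap and quoting the bound $\max_{\ket{\phi}}\bra{\phi}\rho\ket{\phi}\ge\Omega(A_3^{1089})$) and in choosing a marginally different decision threshold, but the logic is identical.
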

\begin{proof}
First, let us consider case ($a$), where we have $\text{LR}(\rho)=O(\log n)$. We recall that $\mathcal{W}_3(\rho)\leq 2\text{LR}(\rho)$. From this, it immediately follows that $\mathcal{W}_3(\rho)=O (\log n)$. As we demand that $S_2(\rho)=O(\log n)$, we have $-\frac{1}{2}\ln A_3(\rho)=O(\log n)$.

Next, let us consider case ($b$): Here, we have that $D_\text{F}(\rho)=\omega(\log n)$. 
We now give a lower bound on $-\frac{1}{2}\ln A_3(\rho)$ via $D_\text{F}$. In particular, we have
\begin{equation}
    D_\text{F}(\rho)=\min_{\sigma\in \text{STAB}}-\ln \mathcal{F}(\rho,\sigma) \leq \min_{\ket{\phi}\in \text{STAB}_\text{pure}}-\ln \bra{\phi}\rho\ket{\phi}\,.
\end{equation}
Here, we relaxed the minimization over the set of mixed stabilizer states to the minimization of pure stabilizer states only.
Then, Ref.~\cite{iyer2024tolerant} (Theorem 5.14) showed that if $A_3\geq \gamma$ with some $\gamma$, then $\max_{\ket{\phi}} \bra{\phi}\rho\ket{\phi}\geq \Omega(\gamma^{1089})$. Here, the power can be significantly improved in future works.
This implies that $\max_{\ket{\phi}} \bra{\phi}\rho\ket{\phi}\geq \Omega(A_3^{1089})$, and thus
\begin{equation}
D_\text{F}(\rho)\leq \min_{\ket{\phi}\in \text{STAB}_{\text{pure}}}-\ln \bra{\phi}\rho\ket{\phi} \leq \Omega(-\log A_3(\rho))\,.
\end{equation}
Due to $D_\text{F}(\rho)=\omega(\log n)$ by assumption, we have $-\frac{1}{2}\ln A_3(\rho)=\omega(\log n)$.

Now, it remains to show that one can efficiently distinguish $-\ln A_3=O(\log n)$ from $-\ln A_3=\omega(\log n)$. There exist an efficient algorithm to estimate $A_3$ to $\epsilon$ additive precision using $O(\epsilon^{-2})$ copies~\cite{haug2023efficient}, as we summarized in \SM{}~\ref{sec:measwitness}.
We now use Hoeffding's inequality 
\begin{equation}\label{eq:Hoeffding}
    \text{Pr}(\vert \hat{A}_3 - A_3\vert\ge \epsilon)\equiv\delta\le 2\exp\left(-\frac{2\epsilon^2 L}{\Delta\lambda^2}\right)\,,
\end{equation}
where $\hat{A}_3$ is the estimator for $A_3$ using $L$ measurements, $\delta$ is the failure probability, i.e. that the estimator $\hat{A}_3$ gives an error larger than $\epsilon$, and $\Delta\lambda$ the range of possible measurement outcomes. For the algorithm to estimate $A_3$~\cite{haug2023efficient}, the possible measurement outcomes are $\pm1$, and thus $\Delta\lambda=2$.  

Now, we need to distinguish the low-magic case $-\ln A_3^{\text{low}}=O(\log n)$ (e.g. $A_3=n^{-c}$ with constant $c>0$) from the high magic case $-\ln A_3^{\text{high}}=\omega(\log n)$ (e.g. $A_3^{\text{high}}=2^{-\ln^2(n)}$) while allowing only a small failure probability $\delta$.
Let us choose the decision boundary as $\epsilon_0=\frac{1}{2}n^{-c}$, i.e. when we measure $\hat{A}_3(\rho)>\epsilon_0$ then we say the state has low magic, else if we measure $\hat{A}_3(\rho)<\epsilon_0$ we say that the state has high magic. Further, let us choose the number of measurements as $L=n^{2c+1}=\text{poly}(n)$.

Then, the probability to wrongly classify a low-magic state as high magic is given by
\begin{gather}
    \delta^{\text{low}}\equiv\text{Pr}(\hat{A}_3^{\text{low}}\leq A_3^{\text{low}}-\epsilon ) =\text{Pr}(\hat{A}_3^{\text{low}}\leq \epsilon_0 )\\
    =2\exp\left(-\frac{(A_3^{\text{low}}-\epsilon_0)^2 L}{2}\right)=2\exp\left(-\frac{(n^{-c}-\frac{1}{2}n^{-c})^2 n^{2c+1}}{2}\right)=2\exp\left(-\frac{n}{8}\right)=O(2^{-n})\,.
\end{gather}
Next, the probability to wrongly classify a high-magic state as low-magic is given by
\begin{gather}
    \delta^{\text{high}}\equiv\text{Pr}(\hat{A}_3^{\text{high}}\geq A_3^{\text{high}}+\epsilon ) =\text{Pr}(\hat{A}_3^{\text{high}}\geq \epsilon_0 )\\
    =2\exp\left(-\frac{(A_3^{\text{high}}-\epsilon_0)^2 L}{2}\right)=2\exp\left(-\frac{(2^{-\ln^2(n)}-\frac{1}{2}n^{-c})^2 n^{2c+1}}{2}\right)=O(2^{-n})\,.
\end{gather}
Thus, we can distinguish low and high magic states with exponentially low failure rate using a polynomial number of measurements.
\end{proof}

\section{Certifying noisy magic resource states}\label{sec:cliffUnital}
Now, we show that using our methods we can efficiently certify magic resource states under noise. In particular, we want to certify whether a given noisy state contains a low or high number of T-states $\ket{T}=(\ket{0}+e^{-i\pi/4}\ket{1})/\sqrt{2}$. T-states are used to generate magic T-gates for universal fault-tolerant quantum computers~\cite{bravyi2005universal}. 

Such certification task of T-states are relevant for fault-tolerant quantum computing.  To run universal algorithms, one requires many copies of single-qubit T-states which can be used to generate T-gates. 
For example, let us assume that we are given a magic resource state from an untrusted source.  They claim that the given state is a tensor product of many single-qubit T-states. 
However, the T-states were affected by noise, where the noise model is not known. As full tomography is prohibitively expensive, one would like to have a simple check of the contained magic states.  In particular, can one verify whether the untrusted source indeed provided many T-states, or actually just mostly provided non-magical states?

In particular, we consider $n$-qubit states with $t$ T-states where a noise channel $\Lambda_\text{C}(\rho)$ is applied to
\begin{equation}
    \rho_t= \Lambda_\text{C}(\ket{T}^{\otimes t} \ket{0}^{\otimes(n-t)}\bra{T}^{\otimes t}\otimes \bra{0}^{\otimes(n-t)})
\end{equation}
with T-state $\ket{T}=\frac{1}{\sqrt{2}}(\ket{0}+e^{-i\pi/4}\ket{1})$ and stabilizer state $\ket{0}$.
As noise channel, we consider mixed unital Clifford channels, which are given by
\begin{equation}
    \Lambda_\text{C}(\rho)=\sum_{i} p_i U_\text{C}^{(i)} \rho {U_\text{C}^{(i)}}^\dagger
\end{equation}
where $\sum_i p_i=1$, $p_i\geq0$, and $U_\text{C}^{(i)}$ are Clifford unitaries, i.e. unitaries that can be generated from CNOT gates, S-gates and Hadamard gates only. For example, as special case, $\Lambda_\text{C}(\rho)$ contains Pauli channels. Note that mixed unital Clifford channels are Clifford channels, and thus cannot increase magic.

Now, we ask whether one can distinguish states with different $t$. Here, we assume low entropy $S_2(\rho_t)=-\ln\text{tr}(\rho_t^2)=O(\log n)$. This is necessary as the magic of states with high entropy cannot be tested in general~\cite{bansal2024pseudorandomdensitymatrices}.

Now, using our test we can efficiently distinguish $\rho_t$ between small and large $t$ when $S_2(\rho_t)=O(\log n)$ for any mixed unital Clifford channel:

\begin{proposition}[Testing noisy magic states]
There is an efficient quantum algorithm to certify whether given noisy $n$-qubit quantum state $\rho_t=\Lambda_\text{C}((\ket{T}\bra{T})^{\otimes t}\otimes (\ket{0} \bra{0})^{\otimes(n-t)})$ contains
either $(a)$ $t=O(\log n)$ or $(b)$ $t=\omega(\log n)$ T-states $\ket{T}=\frac{1}{\sqrt{2}}(\ket{0}+e^{-i\pi/4}\ket{1})$ when it is subject to arbitrary $n$-qubit mixed unital Clifford channels $\Lambda_\text{C}(\rho)$ and $S_2(\rho_t)=O(\log n)$. 
\end{proposition}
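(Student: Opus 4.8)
The plan is to reduce the claim to the efficient tester of Thm.~\ref{thm:testing_sup} by controlling a single quantity, $-\ln A_3(\rho_t)$, which is all we ever need to estimate. Concretely, I would establish two bounds: in case $(a)$ the number of T-states forces $-\ln A_3(\rho_t)=O(\log n)$, while in case $(b)$ it forces $-\ln A_3(\rho_t)=\omega(\log n)$. Because $S_2(\rho_t)=O(\log n)$ is assumed, the maximally-mixed reference value $A_3=2^{-n}$ plays no role, and the two regimes can be separated exactly as in the proof of Thm.~\ref{thm:testing_sup}.

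For the upper bound (case $(a)$), write $\ket{\psi_t}=\ket{T}^{\otimes t}\otimes\ket{0}^{\otimes(n-t)}$, so that $\rho_t=\Lambda_\text{C}(\ket{\psi_t}\bra{\psi_t})$. Since $\text{LR}$ is additive and stabilizer states carry no robustness, $\text{LR}(\ket{\psi_t})=t\,\text{LR}(\ket{T})$, and because $\Lambda_\text{C}$ is a Clifford channel that cannot increase a magic monotone, $\text{LR}(\rho_t)\leq t\,\text{LR}(\ket{T})$. Combining with the witness bound $\mathcal{W}_3(\rho)\leq 2\,\text{LR}(\rho)$ from~\eqref{eq:robustnessbound} and recalling that $-\tfrac{1}{2}\ln A_3=\mathcal{W}_3+\tfrac{5}{2}S_2$, the assumptions $t=O(\log n)$ and $S_2=O(\log n)$ give $-\ln A_3(\rho_t)=O(\log n)$.

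For the lower bound (case $(b)$), I would show that $-\ln A_3$ is non-decreasing under any mixed unital Clifford channel. Writing $\Lambda_\text{C}(\rho)=\sum_i p_i U_\text{C}^{(i)}\rho {U_\text{C}^{(i)}}^\dagger$, each $\text{tr}(U_\text{C}^{(i)}\rho {U_\text{C}^{(i)}}^\dagger P)=\pm\,\text{tr}(\rho P_i')$ for a Pauli $P_i'$ obtained by Clifford conjugation, so by the triangle inequality and convexity of $x\mapsto x^{6}$ (Jensen's inequality over the probabilities $p_i$), $\lvert\text{tr}(\Lambda_\text{C}(\rho)P)\rvert^{6}\leq\sum_i p_i \lvert\text{tr}(\rho P_i')\rvert^{6}$. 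Summing over $P$ and using that Clifford conjugation permutes the Pauli group bijectively, so $\sum_P\lvert\text{tr}(\rho P_i')\rvert^{6}=2^n A_3(\rho)$ for each $i$, yields $A_3(\Lambda_\text{C}(\rho))\leq A_3(\rho)$. Applying this to $\rho=\ket{\psi_t}\bra{\psi_t}$ and evaluating the product state exactly via $A_3(\ket{\psi_t})=a_3^{\,t}$, where $a_3=A_3(\ket{T})<1$ is a fixed constant and $\ket{0}$ contributes a factor of one, gives $-\ln A_3(\rho_t)\geq t\,(-\ln a_3)=\omega(\log n)$ whenever $t=\omega(\log n)$.

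These bounds place case $(a)$ at $A_3(\rho_t)\geq n^{-c}$ and case $(b)$ at $A_3(\rho_t)=2^{-\omega(\log n)}$, so I would finish exactly as in Thm.~\ref{thm:testing_sup}: estimate $A_3(\rho_t)$ to additive precision with the Bell-measurement algorithm of \SM{}~\ref{sec:measwitness}, fix the threshold $\epsilon_0=\tfrac{1}{2}n^{-c}$, and invoke Hoeffding's inequality~\eqref{eq:Hoeffding} with $\text{poly}(n)$ copies to separate the regimes with exponentially small failure probability. I expect the main obstacle to be the lower-bound step, i.e.\ verifying that $A_3$ cannot \emph{increase} under mixed unital Clifford noise: the naive intuition ``noise destroys magic'' is misleading here, and one must instead lean on the convexity of $x^{2\alpha}$ together with the Clifford-invariance of $A_3$. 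Once that monotonicity is in hand, the tensor-product structure of the noiseless T-state register makes the lower bound on $-\ln A_3$ direct, with no appeal to $D_\text{F}$ needed.
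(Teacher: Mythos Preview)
Your proposal is correct and follows the same overall architecture as the paper's proof: bound $-\ln A_3(\rho_t)$ from above via $\text{LR}$ in case $(a)$, from below via a monotonicity lemma for $-\ln A_3$ under mixed unital Clifford channels in case $(b)$, and then separate the two regimes with the Bell-measurement estimator and Hoeffding's inequality exactly as in Thm.~\ref{thm:testing_sup}.

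The one substantive difference is how you prove the monotonicity $A_3(\Lambda_\text{C}(\rho))\leq A_3(\rho)$. The paper (Lemma~\ref{lem:unitalcliff}) expands $(\sum_i p_i\,\text{tr}(\rho_i P))^{2\alpha}$ into a multi-index sum over $i_1,\dots,i_{2\alpha}$ and then applies H\"older's inequality with exponents $1/(2\alpha)$ to factor the Pauli sum, before collapsing each factor via Clifford invariance of $A_\alpha$. Your route---triangle inequality followed by Jensen's inequality for the convex map $x\mapsto x^{6}$, then the observation that Clifford conjugation permutes the Pauli group so that $\sum_P|\text{tr}(\rho P_i')|^6=2^nA_3(\rho)$ for every $i$---is more elementary and avoids the multi-index expansion entirely; it also makes transparent that the argument works for any $\alpha\geq 1$ (convexity of $x^{2\alpha}$), matching the generality the paper obtains via H\"older. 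Two minor remarks: (i) $\text{LR}$ is in general only sub-additive rather than additive, but since you only need the upper bound $\text{LR}(\ket{\psi_t})\leq t\,\text{LR}(\ket{T})$ this does not affect your argument; (ii) the explicit constant you leave as $a_3=A_3(\ket{T})$ is $5/8$ in the paper, giving $-\tfrac{1}{2}\ln A_3(\ket{\psi_t})=\tfrac{1}{2}t\ln\tfrac{8}{5}$.
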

\begin{proof}
We can efficiently distinguish $(a)$ and $(b)$ using our tester by measuring $A_3(\rho_t)$. 

For $(a)$ with $t=O(\log n)$, we have $\text{LR}=O(\log n)$. This follows from the fact that $\ket{\psi_t}=\ket{T}^{t}\otimes \ket{0}^{n-t}\bra{T}^{t}\otimes \bra{0}^{n-t}$ can be prepared from $t$ T-gates, and thus has $\text{LR}(\ket{\psi_t})=O(t)=O(\log n)$ due to sub-additivity~\cite{howard2017robustness}. As $\text{LR}$ is a magic monotone, it can only decrease under Clifford channels $\text{LR}(\Lambda_\text{C}(\ket{\psi_t}))\leq \text{LR}(\ket{\psi_t})$. Finally, we have 
\begin{equation}
    \mathcal{W}_3=-\frac{1}{2}\ln A_3-\frac{5}{2}S_2\leq 2 \text{LR}
\end{equation}
and  $S_2(\rho_t)=O(\log n)$ which implies $-\frac{1}{2}\ln A_3=O(\log n)$. 

Next, we consider case $(b)$ with large number of T-states where $t=\omega(\log n)$. 
For $\ket{\psi_t}$, we get $-\frac{1}{2}\ln A_3(\ket{\psi_t})=\frac{1}{2}t\ln\frac{8}{5}=\omega(\log n)$ from explicit calculation~\cite{haug2023stabilizer} when $t=\omega(\log n)$.

Now, under mixed unital Clifford channel $\Lambda_\text{C}(\rho)$, we find that $-\frac{1}{2}\ln A_3(\Lambda_\text{C}(\ket{\psi_t}))$ can only increase: %
\begin{lemma}\label{lem:unitalcliff}
Applying arbitrary mixed unital Clifford channel  $\Lambda_\text{C}(\rho)$ onto state $\rho$ monotonously increases the $\alpha$-moment of Pauli spectrum for any integer $\alpha>1$, i.e.  
\begin{equation}
    -\ln A_\alpha(\Lambda_\text{C}(\rho))\geq -\ln A_\alpha(\rho)\,.
\end{equation}
\end{lemma} 
\begin{proof}

To show this, we have
\begin{gather}
    A_\alpha(\Lambda_\text{C}(\rho))=
    2^{-n}\sum_{P\in\mathcal{P}_n} \text{tr}(\Lambda_\text{C}(\rho) P)^{2\alpha}=
    2^{-n}\sum_{P\in\mathcal{P}_n} \text{tr}(\sum_i p_i \rho_i P)^{2\alpha}
\end{gather}
where we introduced the Clifford transformed state $\rho_i=U_\text{C}^{(i)}\rho {U_\text{C}^{(i)}}^\dagger$. Next, we expand the power as
\begin{gather}
    A_\alpha(\Lambda_\text{C}(\rho))
    =2^{-n} \sum_{i_1,i_2,\dots,i_{2\alpha}} \sum_{P\in\mathcal{P}_n}\text{tr}(p_{i_1} \rho_{i_1} P)\text{tr}(p_{i_2} \rho_{i_2} P)\dots \text{tr}(p_{i_{2\alpha}} \rho_{i_{2\alpha}} P)\\
    \leq 2^{-n} \sum_{i_1,i_2,\dots,i_{2\alpha}} p_{i_1}p_{i_2}\dots p_{i_{2\alpha}}\sum_{P\in\mathcal{P}_n}\vert\text{tr}( \rho_{i_1} P)\vert\vert\text{tr}( \rho_{i_2} P)\vert\dots \vert\text{tr}( \rho_{i_{2\alpha}} P)\vert\,.
\end{gather}
We now bound this using from H\"olders inequality~\cite{chen2014brief}
\begin{equation}
    \sum_j a_{1,j}^{\lambda_1}a_{2,j}^{\lambda_2}\dots a_{K,j}^{\lambda_K} \leq (\sum_j a_{1,j})^{\lambda_1}(\sum_j a_{2,j})^{\lambda_2}\dots (\sum_j a_{K,j})^{\lambda_{K}}
\end{equation}
with $\sum_j \lambda_j=1$, $\lambda_j\geq0$ and $a_{k,j}\geq0$. 
Now, we identify $j=P$ $\lambda_P=1/(2\alpha)$, $K=2\alpha$, and $a_{i_s,P}=\vert\text{tr}(\rho_{i_s} P)\vert^{2\alpha}$ with $s=1,\dots,2\alpha$.
Then, we apply H\"olders inequality to get
\begin{gather}
 A_\alpha(\Lambda_\text{C}(\rho)) 
 \leq 2^{-n} \sum_{i_1,i_2,\dots,i_{2\alpha}} p_{i_1} p_{i_2}\dots p_{i_{2\alpha}}\sum_{P\in\mathcal{P}_n}\vert\text{tr}( \rho_{i_1} P)\vert\vert\text{tr}( \rho_{i_2} P)\vert\dots \vert\text{tr}( \rho_{i_{2\alpha}} P)\vert \\
 \leq  2^{-n} \sum_{i_1,i_2,\dots,i_{2\alpha}} p_{i_1}p_{i_2}\dots p_{i_{2\alpha}}(\sum_{P\in\mathcal{P}_n} \vert\text{tr}( \rho_{i_1} P)\vert^{2\alpha})^{\frac{1}{2\alpha}}(\sum_{P\in\mathcal{P}_n} \vert\text{tr}( \rho_{i_2} P)\vert^{2\alpha})^{\frac{1}{2\alpha}}\dots (\sum_{P\in\mathcal{P}_n} \vert\text{tr}( \rho_{i_{2\alpha}} P)\vert^{2\alpha})^{\frac{1}{2\alpha}}\\
= \sum_{i_1,i_2,\dots,i_{2\alpha}} p_{i_1} p_{i_2}\dots p_{i_{2\alpha}} A_\alpha(\rho_{i_1})^{\frac{1}{2\alpha}}
 A_\alpha(\rho_{i_2})^{\frac{1}{2\alpha}}\dots
 A_\alpha(\rho_{2\alpha})^{\frac{1}{2\alpha}}\,.
\end{gather}
Now, we note that $A_\alpha(\rho_i)=A_\alpha(\rho)$ as Clifford unitaries leave $A_\alpha(\rho_i)$ invariant. 
Thus, we finally get
\begin{equation}
    A_\alpha(\Lambda_\text{C}(\rho)) \leq \sum_{i_1,i_2,\dots,i_{2\alpha}} p_{i_1} p_{i_2} \dots p_{i_{2\alpha}}A_\alpha(\rho) = A_\alpha(\rho)\,.
\end{equation}
where we recall $\sum_{i_s} p_{i_s}=1$.
\end{proof}

Thus, according to Lemma~\ref{lem:unitalcliff}, $-\ln A_3$ can only increase under mixed unital Clifford channels, and we have for $t=\omega(\log n)$
\begin{equation}
 -\frac{1}{2}\ln A_3(\Lambda_\text{C}(\ket{\psi_t})) \geq -\ln A_\alpha(\ket{\psi_t})=\omega(\log n)\,.
\end{equation}
Now, using Thm.~\ref{thm:witness_sup}, we can efficiently distinguish $(a)$ with $-\ln A_3=O(\log n)$ and $(b)$ with $-\ln A_3=\omega(\log n)$, which can be shown via Hoeffding's inequality similar to the proof of \SM{}~\ref{sec:testing}.

\end{proof}

\section{Witnessing magic in typical states under depolarizing noise} \label{sec:witness_noise}

In this section, we study our magic witness $\mathcal{W}_\alpha$ for typical states subject to global
depolarizing noise $\rho_\mathrm{dp}
 = (1-p)\ket{\psi}\bra{\psi} + p I/2^n$ 
 with a probability $p$. The mixed-state SRE $M_\alpha=1/(1-\alpha)(\ln A_\alpha+S_2)$ for $n\gg1$ is given by~\cite{turkeshi2023pauli}
 \begin{equation}
   M_\alpha^\mathrm{typ} = \frac{1}{1-\alpha}\ln \left[ \frac{(\eta-1) (2b)^\alpha   \Gamma \left(\alpha+\frac{1}{2}\right) (1-p)^{2\alpha}}{\sqrt{\pi }d }+\frac{1}{d}\right] + \frac{1}{1-\alpha} S_2[\rho_\mathrm{dp}],
\end{equation}
where the R\'enyi-2 entropy of the depolarized state is given by $S_2[\rho_\mathrm{dp}]= -\ln\left[(1-p)^2 + p(2-p)/2^n \right]$, $\Gamma(x)$ is the gamma function, $b=(d/2+1)^{-1}$, $d=2^n$, and  ${\eta=d^2}$. For any fixed $p<1$ and $n \to \infty$, we find that the leading term of the mixed-state SRE $M_\alpha = D_\alpha n + c_\alpha$ is unchanged from that of the pure state. Namely, $D_\alpha = 1/(\alpha-1)$ for $\alpha \geq 2$ and $D_\alpha=1$ for $\alpha<2$, while the probability $p$ only affects the subleading constant $c_\alpha$. As a consequence, the witnesses $\mathcal{W}_\alpha$ are remarkably effective in detecting magic in typical states even in the presence of noise, where the state becomes a mixed state. 
At $p=1$, the state becomes the maximally mixed state $\rho_\mathrm{dp}(p=1) = I / 2^n$, which is a mixed stabilizer state. In this case, the witnesses take the value $-2n \ln 2$, independent of the R\'enyi index $\alpha$. The negative value is expected, since the state is a mixed stabilizer state, so that the witnesses would not classify the state as nonstabilizer state.

Let us be more concrete when $\mathcal{W}_\alpha>0$, i.e. detects magic.
In particular, we consider exponential noise $p=1-2^{-\beta n}$ and determine for which $\beta$ the witness $\mathcal{W}_\alpha$ can successfully detect magic. 
We now compute the explicit scaling of $\mathcal{W}_\alpha$ with $n\gg1$, where we ignore constants and subleading terms.
First, when considering the entropy term, we find that $S_2(\rho_\text{dp})\approx 2\beta n\ln 2$ for $\beta\leq1/2$. 
Then, we have
\begin{equation}
\mathcal{W}_\alpha(\rho_\text{dp})\approx\frac{1}{1-\alpha}\ln(2^{n(1-\alpha)}2^{-2\alpha\beta n} +2^{-n})+\frac{1}{1-\alpha}2\beta n\ln 2 -4\beta n\ln2\,.
\end{equation}
We now ask for what $\beta$ we witness magic $\mathcal{W}_\alpha>0$.
For $\alpha>1$, we can drop the $2^{n(1-\alpha)}2^{-2\alpha\beta n}$ term in the logarithm as it is larger than $2^{-n}$. Then, we find $\beta<(4\alpha-2)^{-1}$. 
For $\frac{1}{2}\leq\alpha<1$, we can instead drop the $2^{-n}$ in the first logarithm as it decays faster than the other term. Then, a straightforward calculation yields $\beta<\frac{1}{2}$.

Next, we consider  the filtered witness $\Tilde{\mathcal{W}}_\alpha$ of~\eqref{eq:filtered_witness}, where we find
\begin{equation}    \Tilde{\mathcal{W}}_\alpha(\rho_{\text{dp}}) = \Tilde{M}_\alpha(\ket{\psi}) + 2\ln(1-p) = \Tilde{M}_\alpha(\ket{\psi}) - 2\beta n\ln2,
\end{equation}
where we recall that $\Tilde{M}_\alpha$ is the filtered SRE in~\eqref{eq:filtered_sre}. For Haar random states, $\Tilde{M}_\alpha=n \ln 2$ for any $\alpha$~\cite{turkeshi2023pauli}. Thus, we find $\Tilde{\mathcal{W}}_\alpha>1$ when $\beta<\frac{1}{2}$ for all $\alpha\geq1/2$. Remarkably, $\Tilde{\mathcal{W}}_\alpha$ for larger $\alpha$ can detect magic as well as $\Tilde{\mathcal{W}}_{1/2}$, while being efficiently measurable for odd integer $\alpha$~\cite{haug2023efficient}. Note that this is true for any state with flat Pauli spectrum subjected to depolarizing noise, since $\Tilde{M}_\alpha$ is independent of $\alpha$. Indeed, typical states were shown to have a flat Pauli spectrum in~\cite{tarabunga2025efficientmutualmagicmagic}.

\section{Witnessing magic in random circuits under local depolarizing noise} \label{sec:witness_noise_local}

\begin{figure}[htbp]
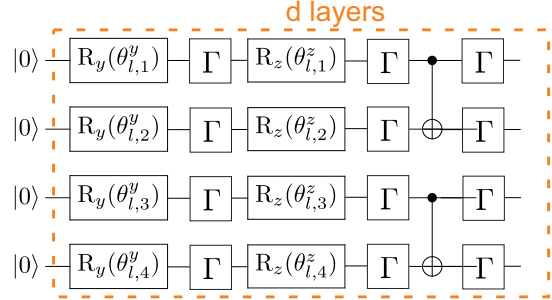

	\centering	
	\subfigimg[width=0.4\textwidth]{}{SketchCircuitDepolMagic.pdf}
    \caption{Hardware efficient circuit with local depolarising noise $\Gamma$ of $d$ layers. Per layer, we have random $x$ and $z$ rotations, and CNOT gates arrange in a one-dimensional configuration. Note that the layer of CNOT gates are shifted vertically by one qubit after each layer to achieve global entangling. 
	}
	\label{fig:depolcircuit}
\end{figure}

\begin{figure}[htbp]
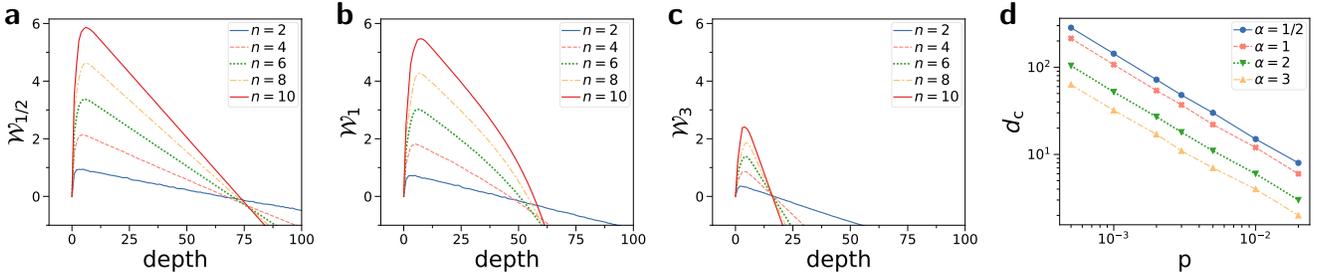

	\centering	
	\subfigimg[width=0.24\textwidth]{a}{witness12WTN10d100r1c0s0_002.pdf}
 	\subfigimg[width=0.24\textwidth]{b}{witness1WTN10d100r1c0s0_002.pdf}
\subfigimg[width=0.24\textwidth]{c}{witness3WTN10d100r1c0s0_002.pdf}
\subfigimg[width=0.24\textwidth]{d}{depthcritWN8d100r10c0s0_02.pdf}
    \caption{Magic witness $\mathcal{W}_\alpha$ against random circuit of depth $d$ for different qubit number $n$, where we choose local depolarising strength $p=0.002$ (see Fig.~\ref{fig:depolcircuit}). We show \idg{a} $\mathcal{W}_{1/2}$, \idg{b} $\mathcal{W}_{1}$, \idg{c} $\mathcal{W}_{3}$. \idg{d} Critical circuit depth $\tilde{d}_\text{c}$ where $\Tilde{\mathcal{W}}_\alpha$ becomes negative for different $p$ for $n=8$. By fitting we find $d_\text{c}\propto p^{-\eta}$ with $\eta\approx 0.96$. Each datapoint is averaged over 10 random circuit instances.
	}
	\label{fig:witnessLocalDepol}
\end{figure}

\begin{figure}[htbp]
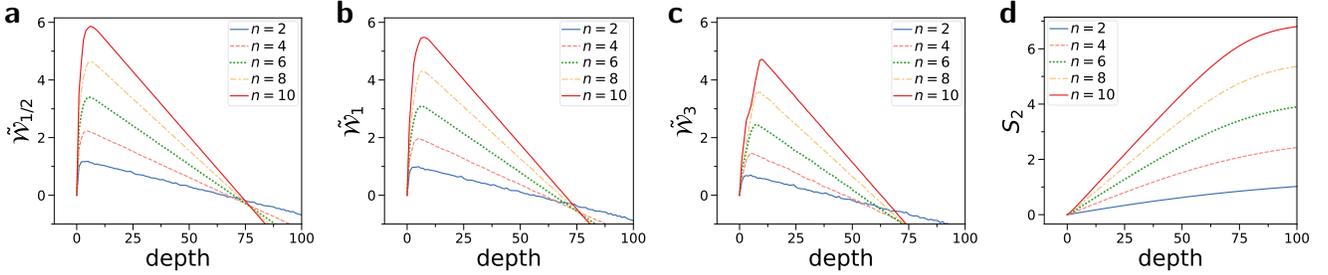

	\centering	
	\subfigimg[width=0.24\textwidth]{a}{filt_witness12WTN10d100r1c0s0_002.pdf}
 	\subfigimg[width=0.24\textwidth]{b}{filt_witness1WTN10d100r1c0s0_002.pdf}
\subfigimg[width=0.24\textwidth]{c}{filt_witness3WTN10d100r1c0s0_002.pdf}
    \subfigimg[width=0.24\textwidth]{d}{entropyWTN10d100r1c0s0_002.pdf}
    \caption{Filtered magic witness $\Tilde{\mathcal{W}}_\alpha$ against random circuit of depth $d$ for different qubit number $n$, where we choose local depolarising strength $p=0.002$ (see Fig.~\ref{fig:depolcircuit}). We show \idg{a} $\Tilde{\mathcal{W}}_{1/2}$, \idg{b} $\Tilde{\mathcal{W}}_{1}$, \idg{c} $\Tilde{\mathcal{W}}_{3}$, and \idg{d} $2$-R\'enyi entropy  $S_2$.
	}
	\label{fig:filteredwitnessLocalDepol}
\end{figure}

\begin{figure}[htbp]
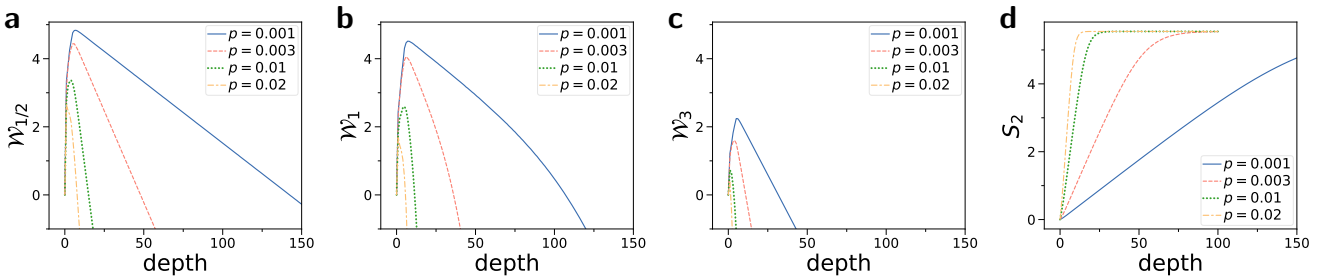

	\centering	
	\subfigimg[width=0.24\textwidth]{a}{witness12WTN8d100r10c0s0_02.pdf}
 	\subfigimg[width=0.24\textwidth]{b}{witness1WTN8d100r10c0s0_02.pdf}
\subfigimg[width=0.24\textwidth]{c}{witness3WTN8d100r10c0s0_02.pdf}
    \subfigimg[width=0.24\textwidth]{d}{entropyWTN8d100r10c0s0_02.pdf}
    \caption{Magic witness $\mathcal{W}_\alpha$ against random circuit of depth $d$ for different local depolarising strength $p$  and $n=8$ (see Fig.~\ref{fig:depolcircuit}). We show \idg{a} $\mathcal{W}_{1/2}$, \idg{b} $\mathcal{W}_{1}$, \idg{c} $\mathcal{W}_{3}$, and \idg{d} $2$-R\'enyi entropy  $S_2$. 
	}
	\label{fig:witnessLocalDepolp}
\end{figure}

\begin{figure}[htbp]
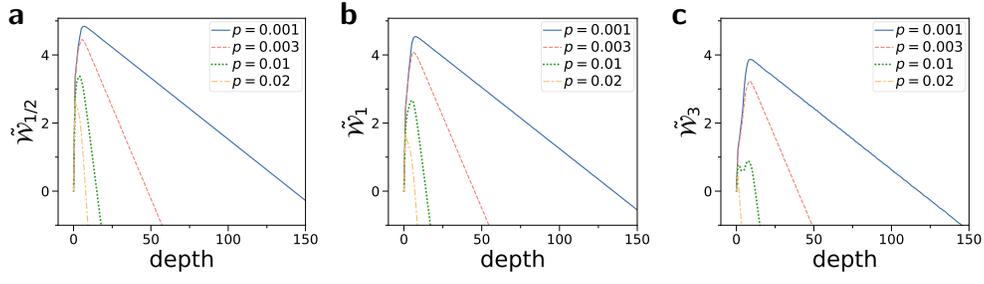

	\centering	
	\subfigimg[width=0.24\textwidth]{a}{filt_witness12WTN8d100r10c0s0_02.pdf}
 	\subfigimg[width=0.24\textwidth]{b}{filt_witness1WTN8d100r10c0s0_02.pdf}
\subfigimg[width=0.24\textwidth]{c}{filt_witness3WTN8d100r10c0s0_02.pdf}
    \caption{Filtered magic witness $\Tilde{\mathcal{W}}_\alpha$ against random circuit of depth $d$ for different local depolarising strength $p$  and $n=8$ (see Fig.~\ref{fig:depolcircuit}). We show \idg{a} $\Tilde{\mathcal{W}}_{1/2}$, \idg{b} $\Tilde{\mathcal{W}}_{1}$, \idg{c} $\Tilde{\mathcal{W}}_{3}$. 
	}
	\label{fig:filteredwitnessLocalDepolp}
\end{figure}

We now consider a model to study magic that can be directly implemented in experiment on quantum computers with realistic noise model.
We consider a circuit shown in Fig.~\ref{fig:depolcircuit}, consisting of $d$ layers of single qubit $y$ rotations $R_y(\theta)=\exp(-i\theta\sigma^y)$, $z$ rotations $R_y(\theta)=\exp(-i\theta\sigma^z)$, followed by CNOT gates arranged in a nearest-neighbour chain. 
We choose the rotation angles completely at random, i.e. $\theta\in(0,2\pi]$.
Then, after every gate we apply single-qubit depolarizing noise $\Gamma(\rho)=(1-p)\rho+p \text{tr}_1(\rho) I_1/2$ with noise strength $p$. This model is typical for NISQ computers~\cite{bharti2022noisy}.

First, we study our witness $\mathcal{W}_\alpha$ in Fig.~\ref{fig:witnessLocalDepol} for different $\alpha$ and qubit number $n$ for this noisy random local circuit model. We find that the witness increases with depth $d$, then decreases, becoming negative for large $d$. We find that the depth where the witness peaks is nearly independent of $n$. We find that for large depth $d$, the witness becomes negative. Notably, we find that all curves cross at the same depth when $\mathcal{W}_\alpha=0$. The critical point $d_\text{c}$ depends on $\alpha$ and $p$, but not on $n$. 

We also show the filtered witness $\Tilde{\mathcal{W}}_\alpha$ in Fig.~\ref{fig:filteredwitnessLocalDepol}. In contrast to $\mathcal{W}_\alpha$, the filtered witness is nearly independent of  $\alpha$, showing for all $\alpha$ nearly the same critical depth $\tilde{d}_\text{c}$ where the witness becomes negative.
We also note that for small $p$ we can witness magic even for quite high entropy $S_2=\text{tr}(\rho^2)$ as shown in Fig.~\ref{fig:witnessLocalDepol}d.

Next, we study our witness $\mathcal{W}_\alpha$ in Fig.~\ref{fig:witnessLocalDepolp} for different $\alpha$ and $p$, where we fix $n=8$. Again, we find that the witness increases with depth $d$, then decreases, becoming negative for large $d$.  For lower $p$, the crossing point shifts to larger $d$. 

We study the filtered witness $\Tilde{\mathcal{W}}_\alpha$ in Fig.~\ref{fig:filteredwitnessLocalDepolp}. The filtered witness behaves similar for all $\alpha$. In particular, the critical depth barely changes with $\alpha$. This contrasts $\mathcal{W}_\alpha$, which is less sensitive to detect magic for larger $\alpha$.

\section{Additional experiments}\label{sec:experiment_sup}
Here, we show an additional experiment on witnessing magic on the IonQ quantum computer.

First, in Fig~\ref{fig:Ionq_filtered}, we consider the experiment in the main text, but using the filtered witness $\Tilde{\mathcal{W}}_\alpha$, which similarly can be efficiently computed for odd $\alpha>1$.
We find similar behavior, but a smaller gap between the log-free robustness of magic and $\Tilde{\mathcal{W}}_\alpha$, indicating that the filtered witness $\Tilde{\mathcal{W}}_\alpha$ is more sensitive than the standard one $\mathcal{W}_\alpha$.
\begin{figure}[htbp]
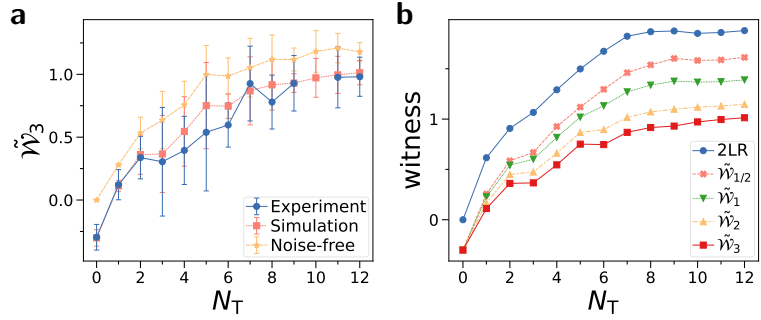

	\centering	
	\subfigimg[width=0.28\textwidth]{a}{FilteredWitness3IonQCombIonqMagicN3d4t3n1000i0r121.pdf}
    \subfigimg[width=0.28\textwidth]{b}{FilteredWitnessSim3CombIonqMagicN3d4t3n1000i0r121.pdf}
    \caption{\idg{a} Filtered magic witness $\Tilde{\mathcal{W}}_3$ measured on the IonQ quantum computer for Clifford circuits doped with $N_\text{T}$ T-gates. We have $n=3$ qubits and estimate an effective average global depolarisation noise $p\approx 0.2$ from the purity. We show experimental result in blue and simulation with global depolarisation noise in orange, while yellow line is a simulation for noise-free states. We average over 10 random initialization of the circuit. 
    \idg{b} Simulation of $\Tilde{\mathcal{W}}_\alpha$ for different $\alpha$ and log-free robustness of magic $\text{LR}$ with global depolarisation noise with parameters of experiment.
	}
	\label{fig:Ionq_filtered}
\end{figure}

Next, we regard preparation of magic resource states which are usually used for magic state distillation.
In particular, we consider the single-qubit T-state 
\begin{equation}
    \ket{T}=\frac{1}{\sqrt{2}}(\ket{0}+e^{-i\pi/4}\ket{1})
\end{equation}
and the single-qubit state with maximal magic 
\begin{equation}
\ket{R}=\cos(\theta/2)\ket{0}+e^{-i\pi/4}\sin(\theta/2)\ket{1})\,,
\end{equation}
with $\theta=\arccos(1/\sqrt{3})$~\cite{bravyi2005universal}.
We also consider the stabilizer product state $\ket{+}=\frac{1}{\sqrt{2}}(\ket{0}+\ket{1})$ as reference. 
We show our witness $\mathcal{W}_3$ and filtered witness $\Tilde{\mathcal{W}}_3$ for the different product states and different qubit number $n$ in Fig.~\ref{fig:Ionq_sup}. 
We find that we can reliably produce valuable resource states with high amounts of magic, certifying that even under modest noise the mixed states indeed contain magic. Further, we correctly flag the  $\ket{+}$ as not containing any magic. 
We note that the filtered witness in Fig.~\ref{fig:Ionq_sup}b gives a stronger signal compared to the standard witness $\mathcal{W}_3$.
\begin{figure}[htbp]
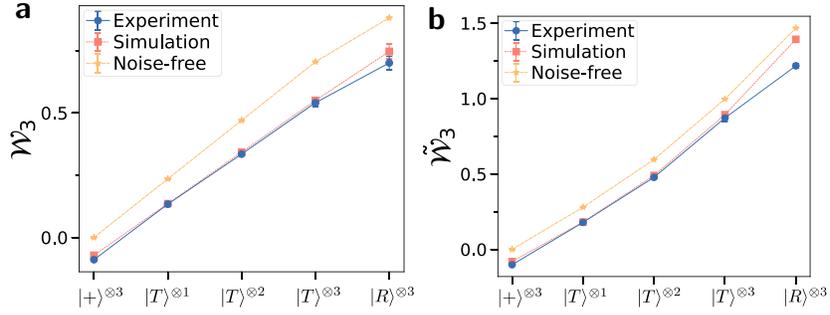

	\centering	
	\subfigimg[width=0.3\textwidth]{a}{WitnessMagicStateExperiment.pdf}
    \subfigimg[width=0.3\textwidth]{b}{FilteredWitnessMagicStateExperiment.pdf}
    \caption{\idg{a} Magic witness $\mathcal{W}_3$ and \idg{b} filtered witness $\Tilde{\mathcal{W}}_3$  measured on the IonQ quantum computer different product states. We have $n=3$ qubits and estimate an effective average global depolarisation noise $p\approx 0.04$ from the purity. We show experimental result in blue and simulation with global depolarisation noise in orange, while yellow line is a simulation for noise-free states.
	}
	\label{fig:Ionq_sup}
\end{figure}

\section{T-gates for PRDM}\label{sec:Tgate}
In this section, we show that pseudorandom density matrices (PRDMs) require $\omega(\log n)$ T-gates to be prepared whenever $S_2=O(\log n)$.

PRDMs are efficiently preparable states that are indistinguishable for any efficient observer from truly random mixed states~\cite{bansal2024pseudorandomdensitymatrices}. 
They generalize pseudorandom states (PRSs), which are computationally indistinguishable from Haar random states~\cite{ji2018pseudorandom}. 
PRDMs are indistinguishable from random mixed states. Formally, these random mixed states are the so-called generalized Hilbert-Schmidt ensemble (GHSE)~\cite{braunstein1996geometry,hall1998random,Zyczkowski_2001,bansal2024pseudorandomdensitymatrices}
\begin{equation}\label{eq:GHSE}
    \eta_{n,m}=\{\operatorname{tr}_m(\ket{\psi}\bra{\psi})\}_{\ket{\psi}\in\mu_{n+m}}\,,
\end{equation}
which are states constructed by taking $n+m$ qubit random states from the Haar measure $\mu_{n+m}$ and tracing out $m$ qubits.

PRDMs are now defined as follows:
\begin{definition}[Pseudo-random density matrix (PRDM)~\cite{bansal2024pseudorandomdensitymatrices}]\label{def:PRDM}
    Let $\kappa=\operatorname{poly}(n)$ be the security parameter with keyspace $\mathcal{K}=\{0,1\}^{\kappa}$. A family of $n$-qubit density matrices $\{\rho_{k,m}\}_{k \in \mathcal{K}}$ are pseudorandom density matrices (PRDMs) with mixedness parameter $m$ if:
    \begin{enumerate}
        \item {Efficiently preparable}: There exists an efficient quantum algorithm $\mathcal{G}$ such that $\mathcal{G}(1^{\kappa}, k,m) = \rho_{k,m}$.
        \item {Computational indistinguishability}: $t=\mathrm{poly}(n)$ copies of $\rho_{k,m}$ are computationally indistinguishable (for any quantum polynomial time adversary $\mathcal{A}$) from the GHSE $\eta_{n,m}$
        \begin{equation}
            \Big{|}\Pr_{k \leftarrow \mathcal{K}}[\mathcal{A}(\rho_{k,m}^{\otimes t}) = 1] - \Pr_{\rho \leftarrow \eta_{n,m}}[\mathcal{A}(\rho^{\otimes t}) = 1]\Big{|} = \operatorname{negl}(n).
        \end{equation}
    \end{enumerate}
\end{definition}
For $m=0$, one recovers PRS~\cite{ji2018pseudorandom}, while for $m=\omega(\log n)$, PRDMs are computationally indistinguishable from the maximally mixed state~\cite{bansal2024pseudorandomdensitymatrices,haug2025pseudorandom}.
However, for the case $m=O(\log n)$ not many results have been known so far.

While PRDMs are efficient, can we give more concrete bounds on their preparation?
Here, we characterize the complexity in terms of number of T-gates.
For entropy $m=S_2=0$, one requires $\Omega(n)$ T-gates to prepare PRDMs~\cite{grewal2023improved}. 
We now show that for PRDMs with $m=S_2=O(\log n)$, at least $\omega(\log n)$ T-gates are required:
\begin{proposition}[T-gates for PRDM]
    Any family of circuits consisting of Clifford operations and $N_\text{T}$ T-gates requires $N_\text{T}=\omega(\log n)$ to prepare PRDMs with entropy $S_2=O(\log n)$.
\end{proposition}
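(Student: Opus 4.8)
The plan is to prove the contrapositive by reduction: I would show that if a PRDM could be built with only $N_\text{T}=O(\log n)$ T-gates, then the efficient magic tester of Thm.~\ref{thm:testing} would itself constitute a polynomial-time quantum distinguisher separating this PRDM from the generalized Hilbert--Schmidt ensemble $\eta_{n,m}$ it is supposed to mimic, violating the computational indistinguishability required by Def.~\ref{def:PRDM}.

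\emph{Low-magic side.} First I would bound the magic that the circuit can inject. Any preparation built from Clifford operations---including mid-circuit measurements, classical control, and the partial trace over ancillas, all free operations in the resource theory---together with $N_\text{T}$ T-gates can be recast, via gate teleportation, as a stabilizer protocol consuming $N_\text{T}$ copies of $\ket{T}$. By subadditivity of the log-free robustness, $\text{LR}(\ket{T}^{\otimes N_\text{T}})=O(N_\text{T})$, and since $\text{LR}$ is monotone under stabilizer operations, the output $n$-qubit state $\rho$ obeys $\text{LR}(\rho)=O(N_\text{T})=O(\log n)$. The bound $D_\text{F}(\rho)\le\text{LR}(\rho)$ of \SM{}~\ref{sec:boundLR} then gives $D_\text{F}(\rho)=O(\log n)$. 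Together with the assumed $S_2(\rho)=O(\log n)$, this places $\rho$ squarely in case $(a)$ of Thm.~\ref{thm:testing}.

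\emph{High-magic side.} Next I would argue that a typical sample of $\eta_{n,m}$ with $m=O(\log n)$ lands in case $(b)$. A Haar-random state on $n+m$ qubits is extensively magical, and I would invoke the concentration results of Refs.~\cite{turkeshi2023pauli,bansal2024pseudorandomdensitymatrices} to conclude that, except with negligible probability, its reduced density matrix satisfies $\text{LR}=\Theta(n)=\omega(\log n)$ and $D_\text{F}=\Theta(n)=\omega(\log n)$, while its $2$-R\'enyi entropy concentrates at $S_2\approx m\ln 2=O(\log n)$. Hence essentially every draw from $\eta_{n,m}$ obeys the promise of case $(b)$.

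\emph{Reduction and main obstacle.} Finally I would run the tester of Thm.~\ref{thm:testing}---which merely estimates $A_3$ via Bell measurements and thresholds it---on the $t=\text{poly}(n)$ copies that Def.~\ref{def:PRDM} grants the adversary. By construction it reports case $(a)$ on the hypothetical low-T PRDM and case $(b)$ on $\eta_{n,m}$, each with probability $1-\text{negl}(n)$, giving a quantum polynomial-time distinguisher of non-negligible advantage and contradicting indistinguishability; thus no such PRDM exists and $N_\text{T}=\omega(\log n)$ is forced. \textbf{The hardest part} is the high-magic side: one must guarantee that the concentration of \emph{both} $\text{LR}$ and $D_\text{F}$ to $\omega(\log n)$ \emph{and} of $S_2$ to $O(\log n)$ holds simultaneously over $\eta_{n,m}$ at $m=O(\log n)$, so that almost every sample---not merely a worst case---falls into case $(b)$. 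Lower-bounding the magic monotones of a \emph{reduced} random state is the delicate step, whereas the low-magic upper bound rests only on the routine subadditivity and monotonicity arguments.
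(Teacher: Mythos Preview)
Your overall strategy---contradiction via the $A_3$-based tester, with the low-magic side handled by subadditivity and monotonicity of $\text{LR}$ and the high-magic side by properties of the GHSE---matches the paper's. The execution of the high-magic side, however, differs. You route everything through the \emph{hypotheses} of Thm.~\ref{thm:testing}, which obliges you to show that a typical GHSE sample has both $\text{LR}=\omega(\log n)$ and $D_\text{F}=\omega(\log n)$; you correctly flag the $D_\text{F}$ lower bound for the reduced random state as the delicate step. The paper bypasses the monotones entirely on this side and works directly with the tester's actual observable $A_3$: it proves the elementary Pauli-spectrum inequality
\[
-\ln A_3\bigl(\text{tr}_m(\rho)\bigr)\ \ge\ -\ln A_3(\rho)-m\ln 2,
\]
which, combined with $-\ln A_3=\Theta(n)$ for Haar-random pure states~\cite{turkeshi2023pauli}, immediately yields $-\ln A_3=\Theta(n)$ for the GHSE whenever $m=O(\log n)$. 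Since the distinguisher only thresholds $\hat A_3$, this suffices without ever lower-bounding $D_\text{F}$ of the random mixed state. Your route is not incorrect, but the paper's shortcut dissolves precisely the ``hardest part'' you identified into a two-line computation on Pauli coefficients.
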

\begin{proof}
This can be shown by contradiction: Assume PRDMs preparable with $N_\text{T}=O(\log n)$ and $S_2=O(\log n)$ exist. Circuits with $N_\text{T}$ T-gates can be prepared using $N_\text{T}$ T-states $\ket{T}$ and Clifford operations, which has $\text{LR}(\ket{T}^{\otimes N_\text{T}})=O(N_\text{T})$ due to sub-additivity and monotonicity~\cite{howard2017robustness}. Due to $\mathcal{W}_3\leq 2\text{LR}$, this implies $\mathcal{W}_3=O(\log n)$ and thus $-\ln A_3=O(\log n)$.

Next, we compute $-\ln A_3(\rho)$ for the GHSE $\rho\in\eta_{n,m}$ with $m=O(\log n)$.
For $m=0$, i.e. Haar random states, Ref.~\cite{turkeshi2023pauli} computed the SRE exactly, showing that $-\ln A_3(\ket{\psi})=\Theta(n)$ for $\ket{\psi}\in\eta_{n,0}$.  
Then, we note that one can write states in the Pauli basis as $\rho=2^{-n-m}\sum_{P\in\mathcal{P}_{n+m}}\beta_P P$, with coefficients $\beta_P$. This gives us
\begin{equation}
    -\ln A_3(\rho)=-\ln(2^{-n-m}\sum_{P\in\mathcal{P}_{n+m}}\text{tr}(\rho P)^6)=-\ln(\sum_{P\in\mathcal{P}_{n+m}}\text{tr}(\rho P)^6)+(n+m)\ln2=-\ln(\sum_{P\in\mathcal{P}_{n+m}}\beta_P^6)+(n+m)\ln2\,.
\end{equation}
When performing the partial trace over $m$ qubits for $\rho$, one only keeps the identity Pauli terms $I_m$ on the support of $m$. In particular, one has 
\begin{equation}
    \text{tr}_m(\rho)=2^{-n-m}\sum_{P\in \mathcal{P}_{n+m}} \beta_{P}\text{tr}_m(P)=2^{-n}\sum_{P'\in \mathcal{P}_{n}} \tilde{\beta}_{P'}P'\,,
\end{equation}
with $\tilde{\beta}_{P'}=\beta_{P'\otimes I_m}$.
This results in
\begin{equation}
    -\ln A_3(\text{tr}_m(\rho))=-\ln(\sum_{P\in\mathcal{P}_{n}}\tilde{\beta}_P^6)+n\ln2\,.
\end{equation}
Now, one can easily see
 This results in
\begin{equation}
    -\ln A_3(\rho)=-\ln(\sum_{P\in\mathcal{P}_{n+m}}\beta_P^6)+(n+m)\ln2 \leq -\ln(\sum_{P\in\mathcal{P}_{n}}\tilde{\beta}_P^6)+(n+m)\ln2 \equiv -\ln A_3(\text{tr}_m(\rho)) + m\ln2
\end{equation}
and
\begin{equation}
 -\ln A_3(\text{tr}_m(\rho)) \geq -\ln A_3(\rho) - m\ln2\,.
\end{equation}
Now, for $m=O(\log n)$ and $-\ln A_3(\rho)=\Theta(n)$, we have
\begin{equation}
 -\ln A_3(\text{tr}_m(\rho)) \geq \Theta(n)\,,
\end{equation}
i.e. the GHSE of~\eqref{eq:GHSE} with $m=O(\log n)$ has $-\ln A_3(\text{tr}_m(\rho))=\Theta(n)$. 
However, our test of Thm.~\ref{thm:testing} can efficiently distinguish $-\ln(A_3)=O(\log n)$ and $-\ln(A_3)=\Theta(n)$, and thus distinguish the assumed PRDM (with $N_\text{T}=O(\log n)$) from random mixed states. This contradicts the definition of PRDM and therefore PRDMs must have $N_\text{T}=\omega(\log n)$ whenever $S_2=O(\log n)$.
\end{proof}

\section{Pseudomagic}\label{sec:pseudomagic}

Here, we derive the pseudomagic gap of state ensembles with $S_2=O(\log n)$ entropy.
First, let us formally define pseudomagic state ensembles~\cite{bouland2022quantum,gu2023little,haug2023pseudorandom,bansal2024pseudorandomdensitymatrices}:
\begin{definition}[Pseudomagic]
A \emph{pseudomagic state ensemble} with gap $f(n)$ vs. $g(n)$ consists of two ensembles of $n$-qubit states $\rho_k$ and $\sigma_k$, indexed by a secret key $k \in \mathcal{K}$, $k\in\{0,1\}^{\mathrm{poly}(n)}$ with the following properties:

\begin{enumerate}
\item \emph{Efficient Preparation}: Given $k$, $\rho_k$ (or $\sigma_k$, respectively) is efficiently preparable by a uniform, poly-sized quantum circuit.

\item \emph{Pseudomagic}: With probability $\geq 1 - 1/\mathrm{poly}(n)$ over the choice of $k$, the log-free robustness of magic $\mathrm{LR}(\rho)=\ln(\min \vert c_\phi\vert \text{ s.t }  \rho=\sum_{\phi \in \mathrm{STAB}}c_\phi \phi)$ 
and stabilizer fidelity $D_\mathrm{F}(\rho)=\min_{\sigma\in \mathrm{STAB}}-\ln \mathcal{F}(\rho,\sigma)$
for $\rho_k$ (or $\sigma_k$, respectively) is $\Theta(f(n))$ (or $\Theta(g(n))$, respectively).

\item \emph{Indistinguishability}: For any polynomial $p(n)$, no poly-time quantum algorithm can distinguish between the ensembles of $\mathrm{poly}(n)$ copies 
with more than negligible probability. That is, for any poly-time quantum algorithm $A$, we have that
\[\left| \Pr_{k \gets \mathcal{K}} [A(\rho_k^{\otimes \mathrm{poly}(n)}) = 1] - \Pr_{k \gets \mathcal{K}} [A(\sigma_k^{\otimes \mathrm{poly}(n)}) = 1] \right| = \operatorname{negl}(n)\,.\]
\end{enumerate}
\end{definition}
When the ensemble is pure, i.e. has entropy $S_2(\rho)=0$, then the maximal possible pseudomagic gap is $f(n)=\Theta(n)$ vs $g(n)=\omega(\log n)$~\cite{gu2023little}. In contrast, for highly impure states $S_2=\omega(\log n)$, one has $f(n)=\Theta(n)$ vs $g(n)=0$. The gap for $S_2=O(\log n)$ was previously not known.

Note that for pseudomagic, both low and high magic ensemble must be indistinguishable for any efficient quantum algorithm. This directly puts bounds on the entropy $S_2$ that the ensemble can have:
\begin{lemma}[Entropy for pseudomagic]
Let us assume we have a pseudomagic ensemble with two ensembles, one with high magic $f(n)$ and low magic $g(n)$. We define the $2$-R\'enyi  entropy of the high-magic ensemble as $S_2^f$ and the low-magic ensemble as $S_2^g$. When $S_2^f=O(\log n)$, then $S_2^g=S_2^f+\text{negl}(n)$.
In contrast, when $S_2^f=\omega(\log n)$, then one must have $S_2^g=\omega(\log n)$.
\end{lemma}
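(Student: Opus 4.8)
The plan is to exploit the fact that the $2$-R\'enyi entropy $S_2=-\ln\text{tr}(\rho^2)$ is governed entirely by the purity $\text{tr}(\rho^2)=A_1(\rho)$, which—unlike a genuine magic monotone—can be estimated \emph{efficiently} on two copies of a state via Bell measurements (the purity is computed by the same Algorithm~\ref{alg:SEmeas} used in Thm.~\ref{thm:witness_sup}, taking $\alpha=1$). Since a pseudomagic ensemble demands that the high- and low-magic families be computationally indistinguishable, no efficient algorithm—including the purity estimator—may separate them by more than a negligible amount. This forces the two purities to be negligibly close, after which the whole lemma follows by translating a statement about purities into one about $S_2$.

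Concretely, first I would record the distinguisher. Given $\text{poly}(n)$ copies drawn (for a random key $k$) from either the high- or the low-magic ensemble, estimate $\text{tr}(\rho^2)$ to additive precision $\epsilon$ using $O(\epsilon^{-2})$ copies and Hoeffding's inequality, exactly as in \SM{}~\ref{sec:measwitness}. Writing $P_f$ and $P_g$ for the (typical, i.e.\ holding with probability $\geq 1-1/\text{poly}(n)$ over $k$) purities of the two ensembles, suppose toward a contradiction that $\lvert P_f-P_g\rvert\geq 1/\text{poly}(n)$. Taking $\epsilon$ to be a constant fraction of this gap, the purity estimator distinguishes the ensembles with advantage $\geq 1-\negl(n)$ using $\text{poly}(n)$ copies, contradicting the indistinguishability clause of the pseudomagic definition. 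Hence $\lvert P_f-P_g\rvert=\negl(n)$.

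It then remains to convert this into the two claimed regimes. In the case $S_2^f=O(\log n)$ the purity is bounded below by an inverse polynomial, $P_f=e^{-S_2^f}\geq n^{-c}$ for some constant $c$, so the \emph{relative} deviation $x=(P_g-P_f)/P_f$ is a negligible function times a polynomial, which is again negligible; expanding $S_2^g-S_2^f=-\ln\!\bigl(1+x\bigr)$ and using $\ln(1+x)=x+O(x^2)$ for $\lvert x\rvert<1/2$ yields $S_2^g=S_2^f+\negl(n)$. In the case $S_2^f=\omega(\log n)$ the purity is itself negligible, $P_f=e^{-S_2^f}=n^{-\omega(1)}=\negl(n)$, so $P_g\leq P_f+\lvert P_g-P_f\rvert=\negl(n)$; since a negligible purity means $P_g<n^{-c}$ eventually for every constant $c$, we obtain $-\ln P_g>c\ln n$ for every $c$, i.e.\ $S_2^g=\omega(\log n)$.

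The main obstacle I anticipate is not the asymptotic bookkeeping but making the ``typical purity'' well defined and handling the probabilistic guarantee over the key $k$: the pseudomagic definition only fixes the magic (and implicitly the purity) with probability $\geq 1-1/\text{poly}(n)$, so the distinguisher must be analyzed on the high-probability subset of keys and the negligible failure probability folded into the distinguishing advantage. A secondary subtlety, which is what genuinely separates the two regimes, is that purity estimation gives an \emph{additive} error on $\text{tr}(\rho^2)$, and this converts into an additive error on $S_2$ only when the purity is bounded away from zero—precisely why $S_2^f=O(\log n)$ and $S_2^f=\omega(\log n)$ must be treated separately.
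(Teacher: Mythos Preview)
Your proposal is correct and follows essentially the same approach as the paper: both use efficient purity estimation (the paper phrases it as the SWAP test, you invoke the Bell-measurement algorithm with $\alpha=1$, which is equivalent) together with Hoeffding's inequality and the indistinguishability clause to force the purities negligibly close, then translate back to $S_2$. Your write-up is in fact more careful than the paper's sketch, making explicit the additive-to-logarithmic conversion and why the two regimes $S_2^f=O(\log n)$ and $S_2^f=\omega(\log n)$ must be handled separately.
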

\begin{proof}
This directly follows from  the SWAP test, which efficiently measures $\text{tr}(\rho^2)$ up to additive precision using $O(\epsilon^{-2})$ copies of $\rho$. Note that for entropy we take the logarithm $S_2=-\log (\text{tr}(\rho^2))$. Using Hoeffding's inequality, one can show that two ensembles with entropies $S_2^f$ and $S_2^g$  can be efficiently distinguished whenever $S_2^f=O(\log n)$ and $S_2^g=S_2^f+1/\text{poly}(n)$. 

In contrast, when $S_2^f=\omega(\log n)$, then one can efficiently distinguish it from $S_2^g=O(\log n)$ using polynomial copies via the SWAP test. %
\end{proof}

Now, we show that when both ensembles have $S_2=O(\log n)$, the pseudomagic gap is $f(n)=\Theta(n)$ vs $g(n)=\omega(\log n)$.

\begin{proposition}[Pseudomagic of mixed states]
Pseudomagic state ensembles with entropy $S_2=O(\log n)$ can have a pseudomagic gap of at most $f(n)=\Theta(n)$ vs $g(n)=\omega(\log n)$. 
\end{proposition}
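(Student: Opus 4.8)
The plan is to argue by contradiction, using the efficient tester of Theorem~\ref{thm:testing} as a black box. Suppose, toward a contradiction, that one could achieve a strictly larger gap, i.e.\ a pseudomagic ensemble whose high-magic member $\rho_k$ has $f(n)=\Theta(n)$ and whose low-magic member $\sigma_k$ has $g(n)=O(\log n)$, where both $\mathrm{LR}$ and $D_\mathrm{F}$ realize these scalings (as the pseudomagic definition demands). The entire argument then reduces to showing that such a pair would be efficiently distinguishable, contradicting the indistinguishability clause, so that $g(n)=O(\log n)$ is impossible and the attainable gap is capped at $g(n)=\omega(\log n)$.

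First I would place both ensembles inside the promise region of the tester, i.e.\ verify that both have $S_2=O(\log n)$. The hypothesis supplies $S_2^f=O(\log n)$ for the high-magic ensemble; the Entropy Lemma just proved then forces $S_2^g=S_2^f+\mathrm{negl}(n)=O(\log n)$ for the low-magic ensemble, since otherwise the SWAP test would already separate the two ensembles and destroy pseudomagic. With both entropies controlled, the $S_2=O(\log n)$ promise of Theorem~\ref{thm:testing} holds for each member.

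Next I would match the two ensembles to the two cases of the tester. The high-magic member has $\mathrm{LR}(\rho_k)=\Theta(n)=\omega(\log n)$ and $D_\mathrm{F}(\rho_k)=\Theta(n)=\omega(\log n)$, placing it in case $(b)$; the hypothetical low-magic member has $\mathrm{LR}(\sigma_k)=O(\log n)$ and $D_\mathrm{F}(\sigma_k)=O(\log n)$, placing it in case $(a)$. Theorem~\ref{thm:testing} then yields a $\mathrm{poly}(n)$-copy quantum algorithm that correctly outputs ``$(a)$'' versus ``$(b)$'' with high probability, which is exactly an efficient distinguisher between the two ensembles: feed the adversary's copies to the tester and guess the high-magic ensemble iff it reports case $(b)$. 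This violates indistinguishability, so the assumed gap cannot exist, giving the claimed upper bound $f(n)=\Theta(n)$ vs $g(n)=\omega(\log n)$; tightness follows because the pure-state construction of Ref.~\cite{gu2023little}, which has $S_2=0=O(\log n)$, already attains $g(n)=\omega(\log n)$.

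The main obstacle is the bookkeeping of two layers of probability rather than any deep technical step: the pseudomagic definition only guarantees the magic scalings with probability $\ge 1-1/\mathrm{poly}(n)$ over the key $k$, and the tester itself succeeds only with high probability. I would need to check that these failure probabilities compose to leave a non-negligible — indeed overwhelming — distinguishing advantage, so that the contradiction with the $\mathrm{negl}(n)$ indistinguishability bound is genuine. A secondary point to confirm is that the tester's dichotomy is phrased simultaneously on both $\mathrm{LR}$ and $D_\mathrm{F}$, which is precisely why it is convenient that the pseudomagic definition quantifies magic by both monotones at once, so that cases $(a)$ and $(b)$ can be invoked without any mismatch.
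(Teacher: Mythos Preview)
Your proposal is correct and follows exactly the paper's approach: argue by contradiction, assume $g(n)=O(\log n)$, and invoke the efficient tester of Theorem~\ref{thm:testing} to distinguish the two ensembles, violating indistinguishability. In fact your write-up is more careful than the paper's terse proof, explicitly checking via the Entropy Lemma that both ensembles satisfy the $S_2=O(\log n)$ promise, noting the composition of failure probabilities, and recording tightness via the pure-state construction of Ref.~\cite{gu2023little}---all details the paper leaves implicit.
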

\begin{proof}
Let us prove via contradiction: Let us assume there exist a pseudomagic state ensemble with $f(n)=\Theta(n)$ vs $g(n)=\Theta(\log n)$. Then, due to Thm.~\ref{thm:testing}, there exist an efficient quantum algorithm to distinguish states with $O(\log n)$ magic and $\omega(\log n)$ magic whenever $S_2=O(\log n)$. Thus, this pseudomagic state ensemble cannot exist as it can be efficiently distinguished. 
Thus, we must have a pseudomagic gap of $f(n)=\Theta(n)$ vs $g(n)=\omega(\log n)$.
\end{proof}
Note that to define pseudomagic, one must choose a resource monotone of magic. It has been noted that not all magic monotones have the same pseudomagic gap. In particular, highly non-robust magic measures such as stabilizer nullity (which can change by a factor $\Omega(n)$ by infinitesimally small perturbations) are not good measures for pseudomagic. Here, we choose the robustness of magic $\text{LR}$ and stabilizer fidelity $D_\text{F}$, both of which are well defined magic monotones~\cite{rubboli2024mixed}. 
Note that for technical reasons, we require that both $D_\text{F}$ and $\text{LR}$ must show the same scaling. However, possible future work could reduce testing to depend on $D_\text{F}$ only.

\section{Witnessing magic in many-body systems}\label{sec:manybody}
Here, we present additional data on witnessing magic in many-body systems.
We regard the ground state of the transverse-field Ising model as function of field $h$ with same parameters as in main text.

In Fig.~\ref{fig:ising_sup}a, we show the SRE $M_2$ for different $h$ of the ground state. We see a clear peak close to the critical point $h=1$.
In Fig.~\ref{fig:ising_sup}b, we show the filtered witness $\Tilde{\mathcal{W}}_2$ against size of the bipartition $\ell$ with $\rho=\text{tr}_{\bar{\ell}}(\ket{\psi})$. It shows similar behavior as $\mathcal{W}_2$ shown in main text. Notably,  we observe that $\Tilde{\mathcal{W}}_2$ is slightly more sensitive in detecting magic than $\mathcal{W}_2$. 
In Fig.~\ref{fig:ising_sup}c, we regard the $2$-R\'enyi entropy of the bipartition as function of $\ell$. $S_2$ decreases with $h$. Notably, we find that for $h>1$ the entropy is decaying at the boundaries of $\ell$. This explains the sharp rise of $\mathcal{W}_2$ that we observe for large $\ell$, as the second term in  $\mathcal{W}_2$ strongly depends on $S_2$.
In Fig.~\ref{fig:ising_sup}d, we regard the bipartition $\ell_\text{c}$ for which we detect magic, i.e. for $\ell\geq\ell_\text{c}$ we have $\mathcal{W}_2>0$. 
We find that for small $h$, one requires large $\ell_\text{c}$ to detect magic, which subsequently decreases with $h$ until $h\approx1$. Notably, the decrease appears to be exponentially fast with $h$, i.e. $\ell_\text{c}\approx n^{-h+1}$.
For $h>1$, we find $\ell_\text{c}=2$.

\begin{figure}[htbp]
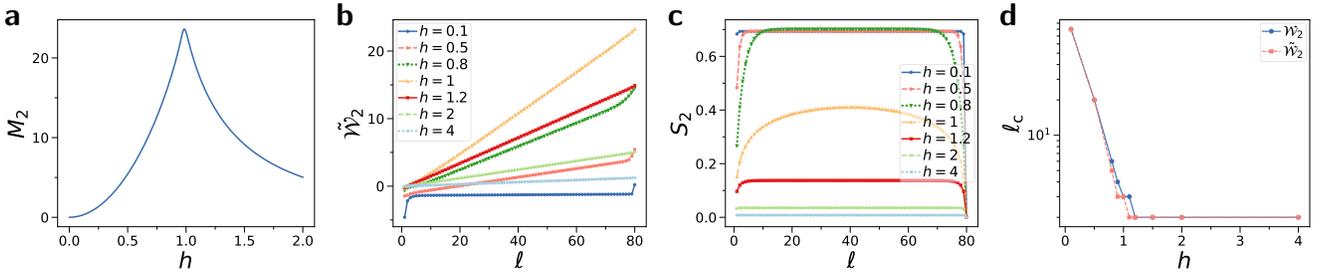

	\centering	
\subfigimg[width=0.24\textwidth]{a}{magicNCombMN80D8m3p40J1h1g0S150P1C1R10c3_hD51p0_0P2_0.pdf}
\subfigimg[width=0.24\textwidth]{b}{filteredwitnessPN80D12m3p40J1h4_0g0S100P0C1R-1c12.pdf}
\subfigimg[width=0.24\textwidth]{c}{entropyPN80D12m3p40J1h4_0g0S100P0C1R-1c12.pdf}
\subfigimg[width=0.24\textwidth]{d}{ellhcPN80D12m3p40J1h4_0g0S100P0C1R-1c12.pdf}
    \caption{Ground state of the transverse-field Ising model with $n=80$ qubits.
    \idg{a} SRE $M_2$ as function of $h$.
    \idg{b} Filtered witness $\Tilde{\mathcal{W}}_2$ against bipartition size $\ell$ for different $h$. 
    \idg{c} Entropy $S_2(\ell)$ of bipartition as function of $\ell$ for different $h$.
    \idg{d} Bipartition size $\ell\geq \ell_\text{c}$ for which $\mathcal{W}_2>0$, i.e. we can detect magic. We show for different $h$ and standard witness $\mathcal{W}_2$ and filtered witness $\Tilde{\mathcal{W}}_2$.
	}
	\label{fig:ising_sup}
\end{figure}

\end{document}